\def\l@subsection{\@tocline{2}{0pt}{2.5pc}{5pc}{}}
\def\l@subsubsection{\@tocline{2}{0pt}{5pc}{7.5pc}{}}
\numberwithin{equation}{section}
\newtheorem{thm}{Theorem}[section]
\newtheorem{cor}[thm]{Corollary}
\newtheorem{lem}[thm]{Lemma}
\newtheorem{prop}[thm]{Proposition}
\newtheorem{defn}[thm]{Definition}
\begin{document}
\allowdisplaybreaks{
\title[]{A STOCHASTIC DIFFERENTIAL EQUATION FOR LASER PROPAGATION IN MEDIAS WITH RANDOM GAUSSIAN ABSORPTION COEFFICIENTS:~~MODIFIED BEER'S LAW
SOLUTION VIA A VAN KAMPEN CLUSTER EXPANSION}
\author{Steven D Miller}\email{stevendm@ed-alumni.net}
\address{}
\maketitle
\begin{abstract}
Let $\mathbb{I\!D}=[0,\mathrm{L}]\subset\mathbb{R}^{+}$ be a slab geometry with boundaries $z=0$ and $z=\mathrm{L}$. A laser beam with a flat collimated incident intensity $\psi_{o}$ enters the slab along the z-axis or unit vector $\widehat{\mathbf{e}}_{3}$ at $z=0$. The slab contains purely absorbing matter with an absorption coefficient of $\mathsf{A}$ (in $cm^{-1}$)with respect to the wavelength. If $\mathsf{A}$ is constant and homogenous then the beam decays as Beer's law $\psi(z,\widehat{\mathbf{e}}_{3})=\psi_{o}\exp(-\mathsf{A} z)$. The absorption coefficient is randomly fluctuating in space as $\mathbf{A}(z)=\mathsf{A}+\alpha \mathsf{A}{\mathbf{G}}(z)$, where $\alpha>0$ determines the magnitude of the fluctuations, and the Gaussian random function/noise $\mathbf{G}(z)$ has correlation $\mathbb{E}\big\lbrace
\mathlarger{\mathbf{G}}(z_{1}){\otimes}~\mathlarger{\mathbf{G}}(z_{2})\big\rbrace=\phi(z_{1},z_{2};\xi)=\bm{\mathsf{C}}\exp(-|z_{1}-z_{2}|^{2}\xi^{-2})$ with correlation length $\xi$. The beam propagation and attentuation within the medium is then described by the stochastic differential equation
\begin{align}
d\widehat{\bm{\psi}(z,\mathbf{e}_{3})}=-\mathsf{A}\widehat{\bm{\psi}(z,\mathbf{e}_{3})}dz-\alpha\mathsf{A}\widehat{\bm{\psi}(z,\mathbf{e}_{3})}\nonumber
\mathbf{G}(z)dz
\end{align}
The stochastic average of the solution is derived via a Van-Kampen-like cumulant expansion, truncated at second order for Gaussian dominance, giving a modified Beer's law of the form
\begin{align}
&\mathbb{E}\big\lbrace\widehat{\bm{\psi}(z,\mathbf{e}_{3})}\big\rbrace
=\psi_{o}\exp(-\mathsf{A}z)\exp\left(\frac{1}{2}\alpha^{2}\mathsf{A}^{2}\bm{\mathsf{C}}\int_{0}^{z}\bigg|\int_{0}^{z_{1}}\mathlarger{\phi}(z_{1},z_{2};\xi))dz_{2}\bigg|dz_{1}\right)
\nonumber\\&=\psi_{o}\exp(-\mathsf{A}z)\exp\left(\frac{1}{4}\alpha^{2}\mathsf{A}^{2}\bm{\mathsf{C}}\xi\left[\exp(-z^{2}/\xi^{2})\left(\sqrt{\pi}z Erf\left(\frac{z}{\xi.}\right)\exp\left(\frac{z^{2}}{\xi^{2}}\right)+\xi\right)-\xi\right]\right)\nonumber
\end{align}
The deterministic Beer's law is recovered as $\alpha\rightarrow 0$ and for $z\gg\xi$.
\end{abstract}
\raggedbottom
\maketitle
\section{{Introduction}}
In this paper, a stochastic differential equation is formulated and solved, to describe laser propagation into absorbing matter where the absorption coefficient is a random Gaussian function; that is, the absorption coefficient is a randomly fluctuating spatial function, which fluctuates about a mean value and has Gaussian statistics. We first consider the fundamental description of laser propagation in matter where the absorption and scatter coefficients are constant and homogenous, in terms of the radiative transfer (RT) of photons which are scattered and absorbed by the matter. The fluence or flux evolving via radiative transfer equation (RTE). This is the fundamental equation of 'radiation hydrodynamics' and has found very many applications in physics and engineering\textbf{[1-12].}
\begin{enumerate}
\item In astrophysics to describe radiative transfer in stars; in nuclear engineering to describe neutron transport in reactors; in laser-matter physics, to describe the evolution of an initially collimated laser beam into a scattered distribution of light and energy within matter, and the propagation of scatted light in turbid solids, liquids and medias. Radiation hydrodynamics is also necessary to compute the propagation of laser energy into the fuel capsule during laser-induced fusion or inertial confinement experiments.\textbf{[1-11]}.
\item In biomedical optics, the RTE and its numerical and diffusion approximation have been extensively applied to the description of photon transport and scattering within biological tissues, whole blood and blood cell suspensions; in the development of laser scalpels and laser surgery, where the penetration of laser energy into the tissue determines how deeply the laser scalpel will cut; in photodynamic cancer therapy and cancer radiation dosimetry \textbf{[12-21]}.
\end{enumerate}
The fundamental radiative transport equation (RTE) is related to the Boltzmann equation of a gas from kinetic theory, and describes the evolution of
a flux or intensity $\psi(\mathbf{x},t,\widehat{\bm{\omega}})$, from initial data and with suitable boundary conditions on $\partial\mathbb{I\!D}$. The RTE states that an initial 'beam' or flux of photons loses energy through scattering away from the beam and through absorption/capture, and gains energy from photon sources or emissivity within the medium and scattering directed towards the beam. In effect, it is an energy balance law and given by the integro-differential equation. Details of radiative transfer and the RTE and its many applications are given in many classic and modern texts and there is a very extensive literature.\textbf{[1-12}, and references therein].
\subsection{\textbf{The radiative transport equation and reduction to Beer's law of attenuation for zero scattering}}
Let $\mathbb{I\!D}\subset\mathbb{R}^{3}$ be a domain containing matter which either absorbs and scatters photons. The number density of scattering-absorbing particles is at least greater than 1 percent. The scattering and absorption coefficients are $\mathsf{S}$ and $\mathsf{A}$ with $\mathsf{T}=\mathsf{S}+\mathsf{A}$ and have units of $cm^{-1}$. The mean free path to scatter or absorption is then $\mathsf{M}=\mathsf{T}^{-1}$. The \textbf{radiance} or \textbf{luminosity} in direction $\widehat{\bm{\omega}}$ is $\psi(\mathbf{x},t,\widehat{\bm{\omega}})$, where $(\mathbf{x},t)\in\mathbb{I\!D}\times[0,\infty)$, and represents the flow of energy per unit area per unit solid angle per unit time. The vector $\widehat{\bm{\omega}}$ is a unit vector and $d\widehat{\bm{\omega}}$ is a solid angle element. If a pencil or collimated laser beam of initial intensity $\psi_{0}$ enters scattering-absorbing matter contained in a domain $\mathbb{I\!D}$ on its boundary $\partial\mathbb{I\!D}$, then the fundamental radiative transport equation (RTE) is \textbf{[1-9,12]}.
\begin{align}
&\frac{\partial \psi(\mathbf{x},t,\widehat{\bm{\omega}})}{\partial t}+\widehat{\bm{\omega}}.\nabla\psi(\mathbf{x},t,\widehat{\bm{\omega}})
=-\big(\mathsf{A}+\mathsf{S}){\psi(\mathbf{x},t,\widehat{\bm{\omega}})}\nonumber\\&+
\frac{\mathsf{S}}{4 \pi}\mathlarger{\int}_{4\pi}\mathlarger{\mathscr{P}}(\widehat{\bm{\omega}}\rightarrow\widehat{\bm{\omega}}^{\prime})
{\psi(\mathbf{x},t,\widehat{\bm{\omega}}^{\prime})}+\mathlarger{\mathscr{E}}(\mathbf{x},t,\widehat{\bm{\omega}}),~~~(\mathbf{x},t)\in\mathbb{I\!D}\times\mathbb{R}^{+}
\end{align}
The initial Cauchy data is $\psi(\mathbf{x},0,\widehat{\bm{\omega}})=0$ within the matter, and with suitable boundary conditions for the incident collimated laser beam $\psi_{c}(\mathbf{x}\in\partial\mathbb{I\!D},t,\hat{\mathbf{n}})$ on $\partial\mathbb{I\!D}$, where $\hat{\mathbf{n}}$ is a unit normal vector pointing into the domain.
The term $\mathscr{E}(\mathbf{x},t,\widehat{\bm{\omega}})$ represents emissivity of the matter, which will be taken here to be zero.

If the total intensity or radiance within the matter is a linear combination of collimated and scattered laser energy then
\begin{align}
\psi(\mathbf{x},t,\widehat{\bm{\omega}})=\psi_{c}(\mathbf{x},t,\widehat{\bm{\omega}})+\psi_{s}(\mathbf{x},t,\widehat{\bm{\omega}})
\end{align}
so the RTE becomes
\begin{align}
&\frac{\partial}{\partial t}\big(\psi_{c}(\mathbf{x},t,\widehat{\bm{\omega}})+\psi_{s}(\mathbf{x},t,\widehat{\bm{\omega}})\big)
+\widehat{\bm{\omega}}.\nabla\big(\psi_{c}(\mathbf{x},t,\widehat{\bm{\omega}})+\psi_{s}(\mathbf{x},t,\widehat{\bm{\omega}})\big)\nonumber\\&
=-\big(\mathsf{A}+\mathsf{S})\big(\psi_{c}(\mathbf{x},t,\widehat{\bm{\omega}})+\psi_{s}(\mathbf{x},t,\widehat{\bm{\omega}})\big)\nonumber\\&+
\frac{\mathsf{S}}{4 \pi}\mathlarger{\int}_{4\pi}\mathlarger{\mathscr{P}}(\widehat{\bm{\omega}}\rightarrow\widehat{\bm{\omega}}^{\prime})
\big(\psi_{c}(\mathbf{x},t,\widehat{\bm{\omega}}^{\prime})+\psi_{s}(\mathbf{x},t,\widehat{\bm{\omega}}^{\prime})\big),(\mathbf{x},t)\in\mathbb{I\!D}\times\mathbb{R}^{+}
\end{align}
In the steady state $\psi(\mathbf{x},t,\widehat{\bm{\omega}})=\psi(\mathbf{x},\widehat{\bm{\omega}})$ and the RTE is then
\begin{align}
&\widehat{\bm{\omega}}.\nabla{\psi(\mathbf{x},\widehat{\bm{\omega}})}
=-\big(\mathsf{A}+\mathsf{S}\big)\psi_{c}(\mathbf{x},\widehat{\bm{\omega}})+\psi_{s}(\mathbf{x},\widehat{\bm{\omega}})\big)\nonumber\\&+
\frac{\mathsf{S}}{4 \pi}\mathlarger{\int}_{4\pi}\mathlarger{\mathscr{P}}(\widehat{\bm{\omega}}\rightarrow\widehat{\bm{\omega}}^{\prime})
\big(\psi_{c}(\mathbf{x},\widehat{\bm{\omega}}^{\prime})+\psi_{s}(\mathbf{x},\widehat{\bm{\omega}}^{\prime})\big),~~~\mathbf{x}\in\mathbb{I\!D}
\end{align}
This would describe the evolution of a steady state continuous source laser beam entering the matter. Here, the function $\mathcal{P}(\widehat{\bm{\
\omega}}\rightarrow\widehat{\bm{\omega}}^{\prime})$ is the scattering phase function, which gives the probability that
photons propagating along direction $\widehat{\bm{\omega}}$ are scattered into $\widehat{\bm{\omega}}^{\prime}$. It has a Legendre polynomial series representation $\textbf{[1-9,12]}$
\begin{align}
\mathlarger{\mathscr{P}}(\widehat{\bm{\omega}}\rightarrow\widehat{\bm{\omega}}^{\prime})
=\frac{1}{4\pi}\sum_{n=0}^{\infty}a_{n}\mathrm{P}_{n}(\widehat{\bm{\omega}}\rightarrow\widehat{\bm{\omega}}^{\prime})
\end{align}
The anisotropy of the scattering is given by the factor $\bm{\mathsf{g}}\in[0,1]$ where
\begin{align}
\bm{\mathsf{g}}=\int_{4\pi}(\widehat{\bm{\omega}}.\widehat{\bm{\omega}}^{\prime})\mathlarger{\mathscr{P}}(\widehat{\bm{\omega}}\rightarrow\widehat{\bm{\omega}}^{\prime})d\bm{\omega}
\end{align}
The RTE is an integro-differential equation and as such is very difficult to solve. Numerical analysis, Monte Carlo methods and diffusion and $\mathrm{P}_{N}$ approximations have been extensively applied, and again the literature is by now quite vast. (\textbf{[1-12]}, and references therein.)

If the scattering coefficient is very small with $\mathsf{A}\sim 0$, or set to zero $\mathsf{A}=0$, and also $\mathscr{P}(\widehat{\bm{\omega}}\rightarrow\widehat{\bm{\omega}}^{\prime})=0$, then the RTE simplifies considerably and reduces to the linear transport PDE
\begin{align}
&\frac{\partial\psi_{c}(\mathbf{x},t,\widehat{\bm{\omega}})}{\partial t}+\widehat{\bm{\omega}}.\nabla\psi_{c}(\mathbf{x},t,\widehat{\bm{\omega}})=-\mathsf{A}\psi_{c}(\mathbf{x},t,\widehat{\bm{\omega}}),
~~~(\mathbf{x},t)\in\mathbb{I\!D}\times\mathbb{R}^{+}
\end{align}
with $\psi_{s}(\mathbf{x},t,\widehat{\bm{\omega}})=0$. For a plane or slab geometry of thickness $\mathrm{L}$ , we can choose $\mathbf{x}=(0,0,z)$ and $\widehat{\bm{\omega}}=(0,0,\hat{\mathbf{e}}_{3})$ then $\mathbb{I\!D}=[0,\mathrm{L}]$ for a finite slab and $\mathbb{I\!D}=[0,\infty)=\mathbb{R}^{+}$ for a semi-infinite slab, with the incident beam along $\widehat{\mathbf{e}}_{3}$. Then
\begin{align}
&\frac{\partial\psi_{c}(z,t,\widehat{\mathbf{e}}_{3})}{\partial t}
+\frac{\partial\psi_{c}(z,t,\widehat{\mathbf{e}}_{3})}{\partial z}=-\mathsf{A}\psi_{c}(z,t,\widehat{\mathbf{e}}_{3}),~~~~~~(z,t)\in\mathbb{I\!D}\times\mathbb{R}^{+}
\end{align}
In the steady state $\psi_{c}(z,t,\widehat{\mathbf{e}}_{3})=\psi_{c}(\mathrm{z},\widehat{\mathbf{e}}_{3})$
\begin{align}
&\frac{d\psi_{c}(z,\widehat{\mathbf{e}}_{3})}{d z}=-\mathsf{A}\psi_{c}(z,\widehat{\mathbf{e}}_{3}),~~~~~~z\in\mathbb{I\!D}
\end{align}
\begin{figure}[htb]
\begin{center}
\includegraphics[height=2.0in,width=6.0in]{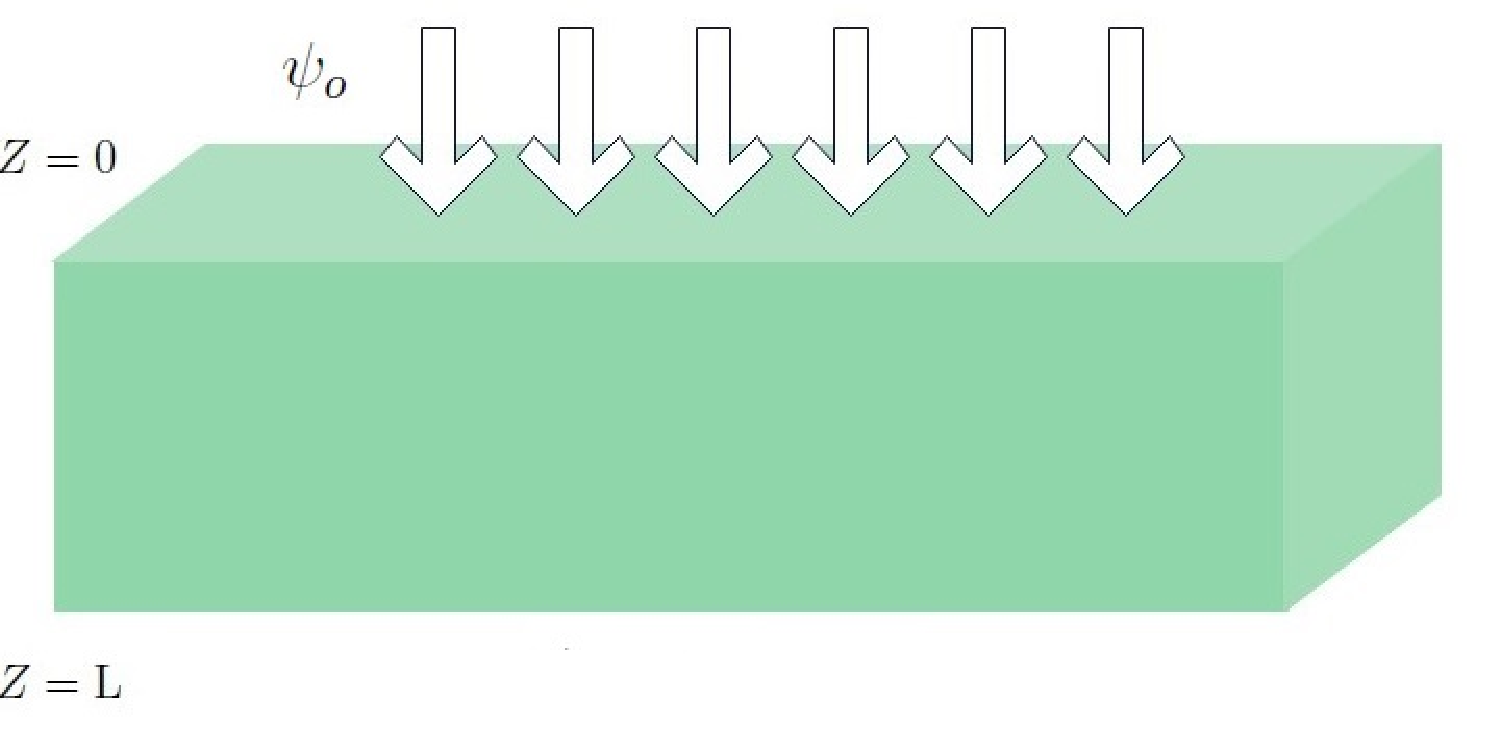}
\caption{ A monochromatic and collimated flat laser beam of intensity $\psi_{o}$ incident upon a slab of matter of thickness $z=\mathrm{L}$ and absorption coefficient $\mathsf{A}$.}
\end{center}
\end{figure}
For a flat collimated laser beam along the z-axis, normal to the slab [Figure 1], the incident intensity or flux is $\psi_{c}(z=0,\widehat{\mathbf{e}}_{3})=\psi_{o}$.
This ODE is easily solved
\begin{align}
&\mathlarger{\int}_{\psi_{o}}^{\psi_{c}(z,t,\widehat{\mathbf{e}}_{3})}\frac{d\overline{\psi_{c}(z,t,\widehat{\mathbf{e}}_{3})}}{\overline{\psi_{c}(z,t,\widehat{\mathbf{e}}_{3})}}=
-\mathsf{A}\int_{0}^{z} dz,~~~~~~z\in\mathbb{I\!D}
\end{align}
so that
\begin{align}
\log\left(\frac{\psi_{c}(z,t,\widehat{\mathbf{e}}_{3})}{\psi_{o}}\right)=-\mathsf{A}z
\end{align}
This then gives Beer's law of exponential attenuation or decay of a collimated beam due to absorption.
\begin{align}
\psi_{c}(z,\widehat{\mathbf{e}}_{3})=\psi_{o}\exp(-\mathsf{A}~z)
\end{align}
The flux or intensity at any depth $z$ is then very easily found if $\mathsf{A}$ and $\psi_{o}$ are known. If $\mathsf{S}$ is nonzero but very small then (1.9) becomes
\begin{align}
&\frac{d\psi_{c}(z,\widehat{\mathbf{e}}_{3})}{d z}=-(\mathsf{A}+\mathsf{S})\psi_{c}(z,\widehat{\mathbf{e}}_{3}),~~~~~~z\in\mathbb{I\!D}
\end{align}
giving the Beer-Lambert attenuation law
\begin{align}
\psi_{c}(z,\widehat{\mathbf{e}}_{3})=\psi_{o}\exp(-(\mathsf{A}+\mathsf{S})z),~~~~~~z\in\mathbb{I\!D}
\end{align}
For the purely absorbing case, suppose the absorption coefficient varies with depth so that $\mathsf{A}=\mathsf{A}(z)$ then
\begin{align}
&\frac{d\psi_{c}(z,\widehat{\mathbf{e}}_{3})}{d z}=-\mathsf{A}(z)\psi_{c}(z,\widehat{\mathbf{e}}_{3}),~~~~~~z\in\mathbb{I\!D}
\end{align}
Then
\begin{align}
&\mathlarger{\int}_{\psi_{o}}^{\psi_{c}(z,t,\widehat{\mathbf{e}}_{3})}\frac{d\overline{\psi_{c}(z,t,\widehat{\mathbf{e}}_{3})}}{\overline{\psi_{c}(z,t,\widehat{\mathbf{e}}_{3})}}
=-\int_{0}^{z}\mathsf{A}(z)
 dz,~~~~~~z\in\mathbb{I\!D}
\end{align}
with solution
\begin{align}
\psi_{c}(z,\widehat{\mathbf{e}}_{3})=\psi_{o}\exp\left(-\int_{0}^{z}\mathsf{A}(z)dz\right),~~~~~~z\in\mathbb{I\!D}
\end{align}
\section{\textbf{Stochastic media with a random Gaussian absorption coefficient}}
Suppose now, the matter is considered as a random or stochastic medium with a randomly fluctuating absorption coefficient. This is more realistic for matter comprised of
stochastic mixtures since absorption properties of the matter will not be uniformly homogeneous throughout; liquid media, particle suspensions subjected to thermal noise; for medical laser applications, living tissue which is 'physiologically noisy' due to random variations in temperature, blood flow, chemical composition, oxygenation etc.
Propagation of beams, waves and radio waves in turbulent medias with random refractive index have been extensively studied as well as wave and beam propagation in a
turbulent atmosphere \textbf{[22-25]}. A stochastic scatter coefficient can also be considered, but we will consider only absorption by matter. The random absorption coefficient can be interpreted as a stochastic function or noise, specifically a Gaussian random function or field. Gaussian random fields or functions (GRFs) are defined and presented in more detail in Appendix A. These GRF can be spatio-temporal or just purely spatial. Only purely spatial GRFs are considered.

Classical random fields or functions are well suited to describe structures and properties of systems, that vary randomly in space and/or time. They have found a myriad of useful applications in applied and computational mathematics and science: in the statistical theory of turbulence, medical science, geoscience, engineering, imaging, computer graphics, statistical mechanics and statistics, biology and cosmology $\mathbf{[26-45]}$. Gaussian random fields (GRFs) are of special significance as they can occur spontaneously via the central limit theorem, in systems with a large number of degrees of freedom. GRFs also arise in the study of complex systems like spin glasses, in optimization problems and protein folding \textbf{[43-45]}. Coupling random fields or noise to ODEs or PDEs is also a useful methodology for studying turbulence, random systems, chaos, pattern formation etc. Within pure and applied mathematics, stochastic partial differential equations (SPDEs), arising from the coupling of random fields/noises to PDEs is an evolving research field $\textbf{[46-49]}$. SPDES can model the propagation of heat, diffusions and waves in random medias or turbulent medias. Many dynamical systems are also affected or influenced by noise, which can either destabilize or stabilize a dynamical system \textbf{[51-54]}. All physical systems invariably possess noise on some scale, either of thermal, mechanical or even quantum origin.
\begin{defn}\textbf{(Gaussian random field in 1D)}\newline
Let $\mathbb{I\!D}$ be a slab of depth $\mathrm{L}$, or semi-infinite slab containing matter with (mean) absorption coefficient $\mathsf{A}$. Let $\mathbf{G}(z)$ be a an intrinsic homogenous and isotropic GRF or 'noise' existing and defined for all $z\in\mathbb{I\!D}$. (See Appendix A.) The GRF is determined entirely by its first two moments and has the properties
\begin{enumerate}
\item
The 1st and 2nd moments are
\begin{align}
&\mathbb{M}_{1}(z)=\mathbb{E}\big\lbrace{\mathbf{G}}(z)\big\rbrace=0\\&
\mathbb{M}_{2}(z,z^{\prime})=\mathbb{E}\big\lbrace{\mathbf{G}}(z)\otimes{\mathbf{G}}(z^{\prime})\big\rbrace
=\phi(z,z^{\prime};\xi),~~~~z,z^{\prime}\in\mathbb{I\!D}
\end{align}
The field is regulated in that
\begin{align}
\mathbb{E}\big\lbrace{\mathbf{G}}(z)\otimes{\mathbf{G}}(z)\big\rangle=\phi(0,0^{\prime};\xi)<\infty
\end{align}
so that the noise is non-white and also differentiable and the derivative $\tfrac{d}{dz}\mathbf{G}(z)$ exists and can be defined.
\item
We can choose a Gaussian-decay correlation ansatz of the form
\begin{align}
\mathbf{M}_{2}(z,z^{\prime})=\mathbb{E}\big\lbrace{\mathbf{G}}(z)\otimes{\mathbf{G}}(z^{\prime})\big\rbrace
=\phi(z,z^{\prime};\xi)=\bm{\mathsf{C}}\exp(-|z-z^{\prime}|^{\kappa}\xi^{-\kappa})
\end{align}
where $\bm{\mathsf{C}}$ is dimensionless and greater than zero. For $\kappa=1$ this 'colored noise' and for $\kappa=2$ $\mathbf{G}(z)=\mathbf{B}(z)$
is a Bargman-Fock random field. For $\kappa>2$ then $\mathbf{G}(z)=\mathbf{S}(z)$ and this is a 'super-Gaussian' correlation. The correlation length is $\xi$ so that (2.4) decays rapidly for $|z-z^{\prime}|>\xi$. In the limit that $\xi\rightarrow 0$ then (2.4) reduces to a white noise in space
\begin{align}
&\lim_{\xi\rightarrow 0}\mathbb{E}\left\lbrace{\mathbf{G}}(z)\otimes{\mathbf{G}}(z^{\prime})\right\rbrace
=\lim_{\xi\rightarrow 0}\phi(z,z^{\prime};\xi)\nonumber\\&
=\mathbb{E}\big\lbrace{\mathbf{W}}(z)\otimes{\mathbf{W}}(z^{\prime})\big\rbrace=\bm{\mathsf{C}}\delta(z-z^{\prime})
\end{align}
\item The field or function ${\mathbf{G}}(z)$ is integrable in that
\begin{align}
\int_{\mathbb{I\!D}}{\mathbf{G}}(z)dz\equiv\int_{0}^{z}\mathbf{G}(z)dz
\end{align}
exists within $\mathbb{I\!D}$ and is also a GRF.(See Appendix B)
\item The m-point moments for a set $\lbrace z_{1},...,z_{m}\rbrace\in\mathbb{I\!D}$ are
\begin{align}
\mathbb{E}\big\lbrace{\mathbf{G}}(z_{1})\otimes...\otimes{\mathbf{G}}(z_{m})\big\rbrace=\mathbb{E}\left\lbrace\prod_{q=1}^{m}\otimes{\mathbf{G}}(z_{q})\right\rbrace
\end{align}
and the cumulants are
\begin{align}
\mathbb{K}\big\lbrace{\mathbf{G}}(z_{1})\otimes...\otimes{\mathbf{G}}(z_{m})\big\rbrace=\mathbb{K}\left\lbrace\prod_{q=1}^{m}\otimes{\mathbf{G}}(z_{q})\right\rbrace
\end{align}
where for $m=2$
\begin{align}
\mathbb{K}\big\lbrace{\mathbf{G}}(z_{1})\otimes{\mathbf{G}}(z_{2})\big\rbrace
=\mathbb{E}\big\lbrace{\mathbf{G}}(z_{1})\otimes{\mathbf{G}}(z_{2})\big\rbrace-
\mathbb{E}\big\lbrace{\mathbf{G}}(z_{1})\big\rbrace\mathbb{E}\big\lbrace{\mathbf{G}}(z_{2})\big\rbrace
\end{align}
and so on.
\item For a Gaussian process or function $\mathbf{G}(z)$, all cumulants beyond order $m=2$ are zero so that
\begin{align}
\mathbb{K}\big\lbrace{\mathbf{G}}(z_{1})\otimes...\otimes{\mathbf{G}}(z_{m})\big\rbrace
=\mathbb{K}\left\lbrace\prod_{q=1}^{m}\otimes{\mathbf{G}}(z_{q})\right\rbrace_{m\ge 3}=0
\end{align}
\end{enumerate}
\end{defn}
The problem of spatial and temporal randomness was initially recognised in connection with problems concerning the propagation of electromagnetic
waves in dielectric medias having random or stochastic refractive indices, and propagation of ERM waves in a turbulent atmosphere or ocean with random density fluctuations
\textbf{[22-25]}. A body of work exists developed in the 1950s and 1960s on the Helmholtz equation with random parameters and boundary conditions $\mathbf{[22-25,63]}$.

Suppose a plane wave of frequency $\omega$ propagates in the z-direction in a medium with a homogeneous and constant refractive index $\mathrm{N}_{o}$. Splitting of the temporal factor $e^{-i\omega t}$ the amplitude $\mathfrak{F}(z)$ of the wave can be approximated in the parabolic or 'quasi-optic' approximation by \textbf{[63]}
\begin{align}
{\frac{d\mathfrak{F}(z)}{dz}}=i\frac{\omega}{c}\mathrm{N}_{o}{\mathfrak{F}}(z)
\end{align}
Suppose now the refractive index is a random Gaussian function $\mathbf{N}(z)$ given by
\begin{align}
{\mathbf{N}}(z)=\mathrm{N}_{o}+\alpha\mathrm{N}_{o}{\mathbf{G}}(z)=\mathrm{N}_{o}(1+\alpha\mathbf{G}(z))
\end{align}
where $\alpha>0$ is a small parameter giving the magnitude of the random fluctuations. Then the averaged index is
$\mathbb{E}\lbrace\mathlarger{\mathbf{N}}(z)\rbrace=\mathrm{N}_{o}+\alpha \mathrm{N}_{O}\mathbb{E}\lbrace\mathlarger{\mathbf{G}}(z)\rbrace=\mathrm{N}_{o}$. Equation (2.11) then becomes a stochastic linear differential equation
\begin{align}
\frac{d\widehat{\bm{\mathfrak{F}}(z)}}{dz}=i\frac{\omega}{c}\mathrm{N}(z)\widehat{\mathfrak{F}(z)}
=i\frac{\omega}{c}\mathrm{N}_{o}\widehat{{\mathfrak{F}}(z)}+i\frac{\omega}{c}\alpha\mathrm{N}_{o}\widehat{\mathfrak{F}(z)}{\mathbf{G}}(z)
\end{align}
or
\begin{align}
d\widehat{\mathfrak{F}(z)}=i\frac{\omega}{c}\mathbf{N}(z)\widehat{\mathfrak{F}(z)}dz=i\frac{\omega}{c}\mathrm{N}_{o}
\widehat{\mathfrak{F}(z)}dz+i\frac{\omega}{c}\alpha\mathrm{N}_{o}\widehat{\mathfrak{F}(z)}{\mathbf{G}}(z)dz
\end{align}
and the expectation recovers the deterministic ODE (2.11).
\begin{align}
\mathbb{E}\left\lbrace\frac{d\widehat{\mathfrak{F}(z)}}{dz}\right\rbrace=i\frac{\omega}{c}\mathrm{N}_{o}{\mathfrak{F}(z)}
+i\frac{\omega}{c}\alpha\mathrm{N}_{o}\widehat{\mathfrak{F}(z)}\mathbb{E}\big\lbrace{\mathbf{G}}(z)\big\rbrace=i\frac{\omega}{c}\mathrm{N}_{o}{\mathfrak{F}(z)}
\end{align}
Since this is a linear approximation and linear ODE, no new terms are induced upon averaging.

Returning to radiative transport and the RTE for neutrons applied to nuclear engineering, reactor noise refers to random fluctuations in the output signal from neutron detectors or sensors placed within the reactor or on its boundary. This has been an important area of investigation from the beginning since neutron flux estimates are required to determine safety and shut-down protocols. Thermal dependence of material properties and cross-sections, mechanical vibrations in the reactor and fuel rods, pressure fluctuations, fluctuations in coolant flow etc.,\textbf{ [55-59]} can all affect the neutron flux within the reactor, absorption/capture cross-sections and bulk absorption or capture coefficient $\mathsf{A}$, by making them randomly fluctuating functions in space and/or time.

For the problem of laser propagation in a medium by a randomly fluctuating absorption coefficient, a similar analysis can be applied.
\begin{prop}$\textbf{Stochastic absorption coeffient}$\newline
A stochastic or randomly fluctuating absorption coefficient can be defined as
\begin{align}
{\mathbf{A}}(z)=\underbrace{\mathsf{A}}_{mean~value}+\underbrace{\alpha\mathsf{A}\mathbf{G}(z)}_{random/fluctuating~term}\equiv \mathsf{A}+\mathbf{R}(z)
\end{align}
with mean value
\begin{align}
\mathbb{E}\big\lbrace\mathbf{A}(z)\big\rbrace=\mathsf{A}+\alpha\mathsf{A}\mathbb{E}\big\lbrace\mathbf{G}(z)\big\rbrace
\equiv\mathsf{A}+\mathbb{E}\big\lbrace{\mathbf{G}}(z)\big\rbrace=\mathsf{A}
\end{align}
since $\mathbb{E}\lbrace\mathbf{G}(z)\rbrace=0$. The binary correlation of the stochastic absorption coefficient for any pair $(z,z^{\prime})$ is
\begin{align}
\mathbb{E}\big\lbrace{\mathbf{A}}(z)\otimes{\mathbf{A}}(z^{\prime})\big\rbrace
=\mathsf{A}^{2}+\alpha^{2}\mathsf{A}^{2}\phi(z,z^{\prime},\xi)
\end{align}
Here, $\alpha>0$ determines the magnitude of the random fluctuations about the mean value $\mathsf{A}$. The random function $\mathbf{A}(z)$ is homogenous and isotropic and stationary in that for any $d>0$, and at least $\kappa\ge 2$
\begin{align}
&\mathbb{E}\big\lbrace\mathbf{A}(z+d)\otimes\mathbf{A}(z^{\prime}+d)\big\rbrace=\mathbb{E}\left\lbrace\mathbf{A}(z)\otimes\mathbf{A}(z^{\prime})\right\rbrace\nonumber\\&
=\mathsf{A}^{2}+\alpha^{2}\mathsf{A}^{2}\phi(z+d,z^{\prime}+d,\xi)=\mathsf{A}^{2}+\alpha^{2}\mathsf{A}^{2}\phi(z,z^{\prime},\xi)
\end{align}
since $\phi(z+d,z^{\prime}+d;\xi)=\phi(z,z^{\prime};\xi)$. Also
\begin{align}
&\frac{d}{dz}\mathbb{E}\left\lbrace\mathbf{A}(z)\otimes\mathbf{A}(z)\right\rbrace=2\mathbb{E}
\left\lbrace\mathbf{A}(z)\otimes\frac{d}{dz}\mathbf{A}(z)\right\rbrace\nonumber\\&
=\lim_{z\rightarrow z^{\prime}}\frac{d}{dz}\bm{\mathsf{C}}\exp(-|z-z^{\prime}|^{\kappa}\xi^{-\kappa})
=-\lim_{z\rightarrow z^{\prime}}\kappa|z-z^{\prime}|^{\kappa-1}\bm{\mathsf{C}}\exp(-|z-z^{\prime}|^{\kappa}\xi^{-\kappa})=0
\end{align}
\end{prop}
\begin{figure}[htb]
\begin{center}
\includegraphics[height=2.0in,width=6.5in]{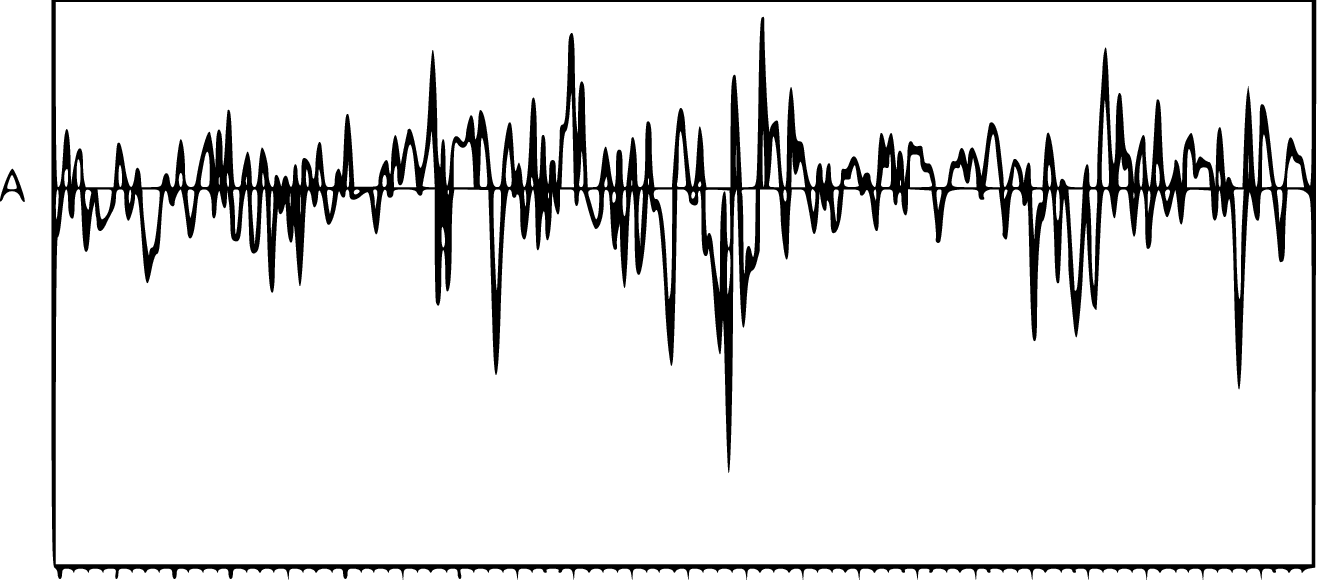}
\caption{Random Gaussian fluctuations of the absorption coefficient about the mean value $\mathsf{A}$}
\end{center}
\end{figure}
\begin{lem}
The mean free path (MFP) to absorption is $\mathsf{M}=\mathsf{A}^{-1}$ for a constant coefficient $\mathsf{A}$. Given the random function or field
$\mathbf{A}({z})$ then the stochastic MFP is
\begin{align}
\mathbf{M}({z})=|\mathbf{A}({z})|^{-1}
\end{align}
and the expectation is
\begin{align}
\mathbb{E}\big\lbrace\mathsf{M}({z})\big\rbrace=\mathbb{E}\big\lbrace|\mathbf{A}({z})|^{-1}\big\rbrace=\mathsf{A}(1+|\mathcal{S}|)=\mathsf{A}\mathcal{B}
\end{align}
where $\mathcal{S}|\ll 1$, so that the mfp is shifted or 'boosted' by a small factor $\mathcal{B}=(1+\mathcal{S})$.
\end{lem}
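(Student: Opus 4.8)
The plan is to use the multiplicative structure $\mathscr{A}(Z)=\mathsf{\Sigma}_{A}\big(1+\alpha\mathscr{G}(Z)\big)$ established in the Proposition on the stochastic absorption coefficient, which gives
\begin{align}
\mathscr{M}_{A}(Z)=|\mathscr{A}(Z)|^{-1}=\mathsf{\Sigma}_{A}^{-1}\,\big|1+\alpha\mathscr{G}(Z)\big|^{-1}=\mathrm{M}_{A}\,\big|1+\alpha\mathscr{G}(Z)\big|^{-1}.
\end{align}
At a fixed depth $Z$ the random variable $\mathscr{G}(Z)$ is, by the Definition of the Gaussian random field in 1D, mean-zero Gaussian with variance $\mathbb{E}\langle\mathscr{G}(Z)\otimes\mathscr{G}(Z)\rangle=\phi(0,0;\zeta)=\bm{\mathsf{C}}<\infty$, so the expectation collapses to a single one-dimensional Gaussian integral
\begin{align}
\mathbb{E}\big\langle\mathscr{M}_{A}(Z)\big\rangle=\frac{\mathrm{M}_{A}}{\sqrt{2\pi\bm{\mathsf{C}}}}\int_{\mathbb{R}}\big|1+\alpha g\big|^{-1}\exp\!\Big(-\frac{g^{2}}{2\bm{\mathsf{C}}}\Big)\,dg,
\end{align}
which is independent of $Z$ by stationarity.

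First I would split the integral at the ``bulk'' event $\mathcal{B}=\{|\alpha g|<1\}$, on which $|1+\alpha g|^{-1}=(1+\alpha g)^{-1}=\sum_{k\ge 0}(-\alpha g)^{k}$ converges absolutely and uniformly on compacts, so term-by-term integration against the Gaussian is legitimate. The odd powers vanish by symmetry and the even powers use the Gaussian moment identity $\mathbb{E}\langle\mathscr{G}(Z)^{2n}\rangle=(2n-1)!!\,\bm{\mathsf{C}}^{n}$, which yields
\begin{align}
\mathbb{E}\big\langle\mathscr{M}_{A}(Z)\big\rangle=\mathrm{M}_{A}\Big(1+\alpha^{2}\bm{\mathsf{C}}+3\alpha^{4}\bm{\mathsf{C}}^{2}+\cdots\Big)=\mathrm{M}_{A}\big(1+|\mathbf{S}|\big),
\end{align}
with $\mathbf{S}=\alpha^{2}\bm{\mathsf{C}}+O(\alpha^{4})$. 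Since $\alpha$ is small by hypothesis and $\bm{\mathsf{C}}=\phi(0,0;\zeta)$ is a fixed finite constant, $|\mathbf{S}|\ll 1$; the positivity of $\mathbf{S}$ is the convexity (Jensen) signature of $x\mapsto x^{-1}$ on $(0,\infty)$, so the mean free path is genuinely ``boosted'' rather than shortened by the fluctuations.

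The step I expect to be the real obstacle is integrability: a Gaussian has unbounded support, so $1+\alpha g$ vanishes at $g=-1/\alpha$, where $|1+\alpha g|^{-1}$ has a non-integrable $|g+1/\alpha|^{-1}$ singularity, and the expectation does not exist as an ordinary Lebesgue integral (nor is it rescued by a principal value, the singularity being one-signed). The honest resolution, which I would state explicitly, is that only the regime $\mathscr{A}(Z)>0$ is physical for an absorption coefficient, so one evaluates $\mathbb{E}\langle\mathscr{M}_{A}(Z)\rangle$ conditional on the overwhelmingly likely event $\mathcal{B}$; the normalisation $\mathbb{P}(\mathcal{B})^{-1}$ and the discarded moment tails $\int_{|g|\ge 1/\alpha}g^{2n}\exp(-g^{2}/(2\bm{\mathsf{C}}))\,dg$ then differ from their full-line counterparts by quantities exponentially small in $1/\alpha^{2}$, controlled by the elementary estimate $\operatorname{erfc}(x)\le e^{-x^{2}}$. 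Carrying the bulk expansion to the stated order and absorbing these exponentially small corrections into the error then gives the claim, the result being naturally read as the leading terms of an asymptotic series in the fluctuation strength $\alpha$ with a small positive shift $\mathbf{S}$.
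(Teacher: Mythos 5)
Your proposal follows the same basic strategy as the paper's proof: factor out $\mathsf{\Sigma}_{A}^{-1}$ from $\mathscr{A}(Z)=\mathsf{\Sigma}_{A}(1+\alpha\mathscr{G}(Z))$, expand $(1+\alpha\mathscr{G}(Z))^{-1}$ as a power series in $\alpha\mathscr{G}(Z)$, discard the odd terms by symmetry of the Gaussian, and resum the surviving even moments into a small positive shift $\mathbf{S}$ of leading order $\alpha^{2}\bm{\mathsf{C}}$. The differences are in the details and, importantly, in the rigor. First, you use the true Gaussian even moments $(2n-1)!!\,\bm{\mathsf{C}}^{n}$, whereas the paper invokes its Appendix~B moment formula $\mathbb{E}\langle|\mathscr{G}|^{\ell}\rangle=\tfrac{1}{2}[\bm{\mathsf{C}}^{\ell/2}+(-1)^{\ell}\bm{\mathsf{C}}^{\ell/2}]$, which omits the double factorial; this changes $\mathbf{S}$ from order $\alpha^{4}$ onward (your $1+\alpha^{2}\bm{\mathsf{C}}+3\alpha^{4}\bm{\mathsf{C}}^{2}+\cdots$ versus the paper's geometric-type sum) but agrees at the leading order $\alpha^{2}\bm{\mathsf{C}}$ that justifies the claim $|\mathbf{S}|\ll 1$.

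Second, and more substantively, you are right that integrability is the real obstacle, and the paper's proof does not address it. Since $\mathscr{G}(Z)$ is Gaussian with unbounded support, $|1+\alpha\mathscr{G}(Z)|^{-1}$ has a one-signed, non-integrable singularity on the event $\mathscr{G}(Z)=-1/\alpha$, so $\mathbb{E}\langle|\mathscr{A}(Z)|^{-1}\rangle=+\infty$ as a Lebesgue integral. The paper's term-by-term expectation of the binomial series is formal: that series converges only on $\{|\alpha\mathscr{G}(Z)|<1\}$, an event of probability strictly less than one, and the stated convergence condition $\mathbf{R}(\alpha,\bm{\mathsf{C}},Q)<1$ is a condition on the resummed moments rather than a justification of the interchange of expectation and summation. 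Your conditioning on the bulk event $\{|\alpha\mathscr{G}(Z)|<1\}$, together with the observation that the normalisation and the discarded tails are exponentially small in $1/\alpha^{2}$, is what actually gives the statement a precise meaning (as an asymptotic expansion of a conditional expectation) and is a genuine improvement on the paper's argument. Your Jensen-convexity remark explaining why $\mathbf{S}>0$ is also a nice structural point absent from the paper.
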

\begin{proof}
\begin{align}
&\mathsf{M}(z)=|\mathscr{A}({z})|^{-1}=[\mathsf{A}+\alpha\mathsf{A}{\mathbf{G}}(z)]^{-1}=
\mathsf{A}^{-1}[1+\alpha\mathbf{G}({z})]^{-1}\nonumber\\&
=\mathsf{A}^{-1}\bigg(\sum_{Q=0}^{\infty}\binom{-1}{Q}\alpha^{Q}|{\mathbf{G}}(z)|^{Q}\bigg)
\end{align}
using the binomial series. Their expectations or averages are then
\begin{align}
&\mathsf{M}(z)=\mathbb{E}\bigg\lbrace\mathbf{M}(z)\bigg\rbrace=\mathbb{E}\bigg\lbrace\frac{1}{\mathbf{A}(z)}\bigg\rbrace
=\mathsf{A}^{-1}\mathbb{E}\bigg\lbrace\big(1+\alpha{\mathbf{G}}(z))^{-1}\bigg\rbrace\nonumber\\&
=\mathsf{A}^{-1}\left(1+\sum_{Q=1}^{\infty}\binom{-1}{Q}|\mathcal{R}(\alpha,\bm{\mathsf{C}},Q|^{Q}\right)
=\mathsf{A}^{-1}(1+\mathcal{S})
\end{align}
where $\mathcal{S}=\sum_{Q=1}^{\infty}\binom{-1}{Q}|\mathcal{R}(\alpha,\bm{\mathsf{C}},Q|^{Q} $ and
\begin{align}
&\mathcal{R}(\alpha,\bm{\mathsf{C}},Q|=
\alpha\left\lbrace\frac{1}{2}(\bm{\mathsf{C}}^{Q/2}+(-1)^{Q}\bm{\mathsf{C}}^{Q/2})\right\rbrace^{1/Q}=\alpha\mathbb{E}\lbrace |\mathbf{G}(z)|^{Q}\rbrace
\end{align}
The (binomial) series will converge if $\mathcal{R}(\beta,\bm{\mathsf{C}},Q)<1$. On averaging, the mfp is then shifted by a
small factor $\mathcal{B}=(1+\mathcal{S})$, and $\mathcal{S}$ is small if $\alpha$ and $\bm{\mathsf{C}}$ are small.
\end{proof}
\section{\textbf{Stochastic differential equation for beam propagation in stochastic absorbing medias}}
A stochastic differential equation  for laser propagation and attenuation within matter with a 'noisy' or random Gaussian absorption coefficient $\mathbf{A}(\mathbf{x})$ can then be formulated as follows.
\begin{prop}(\textbf{Stochastic differential equation for beam propagation})\newline
Let absorbing matter have a random Gaussian absorption coefficient $\mathbf{A}(z)=\mathsf{A}+\alpha\mathsf{A}\mathbf{G}(z)$
as previously defined. Replacing $\mathsf{A}$ with $\mathbf{A}(z)$ gives
\begin{align}
&\frac{\widehat{\bm{\psi}(z,\widehat{\mathbf{e}}_{3})}}{d z}=\mathbf{A}(z)\widehat{\bm{\psi}(z,\widehat{\mathbf{e}}_{3})}\nonumber\\&
=-(\mathsf{A}+\alpha\mathsf{A}{\mathbf{G}}(z))\widehat{\bm{\psi}(z,\widehat{\mathbf{e}}_{3})},~~~~~~z\in\mathbb{I\!D}=[0,\mathrm{L}]
\end{align}
In differential form
\begin{align}
&d\widehat{\bm{\psi}(z,\widehat{\mathbf{e}}_{3})}=-\mathbf{A}(z))\widehat{\bm{\psi}(z,\widehat{\mathbf{e}}_{3})}dz\nonumber\\&
=-(\mathsf{A}+\alpha {A}\mathbf{G}(z))\widehat{\bm{\psi}(z,\widehat{\mathbf{e}}_{3})}dz
=-\mathsf{A}\widehat{\bm{\psi}(z,\widehat{\mathbf{e}}_{3})}dz-\alpha\mathsf{A}\widehat{\bm{\psi}(z,\widehat{\mathbf{e}}_{3})}\mathbf{G}(z)dz,~~~~~~z\in\mathbb{I\!D}
\end{align}
Since the flux/intensity solution itself will now be stochastic, we write $\widehat{\bm{\psi}(z,\widehat{\mathbf{e}}_{3})}$ instead of
${\psi}(z,\widehat{\mathbf{e}}_{3})$. Then
\begin{align}
&\frac{d\widehat{\bm{\psi}(z,\widehat{\mathbf{e}}_{3}})}{\widehat{\bm{\psi}(z,\widehat{\mathbf{e}}_{3})}}=-\mathbf{A}(z))dz
=-(\mathsf{A}+\alpha\mathsf{A}{\mathbf{G}}(z))dz\nonumber
\\&
=-\mathsf{A}dz-\alpha\mathsf{A}{\mathbf{G}}(z)dz,~~~~~~z\in\mathbb{I\!D}
\end{align}
\end{prop}
The linear SDE (3.2) can be solved exactly.
\begin{thm}
The exact solution of this SDE is given by
\begin{align}
&\widehat{\bm{\psi}(z,\widehat{\mathbf{e}}_{3})}=\psi_{o}\exp(-\mathsf{A}~z)\exp\left(-\alpha\mathsf{A}\int_{0}^{z}\mathbf{G}(\bar{z})d\bar{z}
\right)\nonumber\\&=\widehat{\bm{\psi}(z,\widehat{\mathbf{e}}_{3})}\exp\left(-\alpha\mathsf{A}\int_{0}^{z}\mathbf{G}(\bar{z})d\bar{z}
\right)
\end{align}
with stochastic expectation
\begin{align}
&\mathbb{I}(z,\widehat{\mathbf{e}}_{3})=\mathbb{E}\big\lbrace\widehat{\bm{\psi}(z,\widehat{\mathbf{e}}_{3})}\big\rbrace=\psi_{o}\exp(-\mathsf{A}~z)\mathbb{E}\left\lbrace
\exp\left(-\alpha\mathsf{A}\int_{0}^{z}\mathbf{G}(\hat{z})d\hat{z}
\right)\right\rbrace\nonumber\\&=\psi(z,\widehat{\mathbf{e}}_{3})\mathbb{E}\left\lbrace\exp\left(-\alpha\mathsf{A}\int_{0}^{z}\mathbf{G}(\hat{z})d\hat{z}
\right)\right\rbrace
\end{align}
\end{thm}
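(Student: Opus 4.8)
The plan is to treat (3.3) pathwise. By Definition 2.1 the Gaussian random function $\mathscr{G}$ is regulated (non‑white, hence differentiable) and integrable over $\mathbb{I\!D}$, so for $\mathbb{P}$‑almost every realization the map $Z\mapsto\mathscr{G}(Z)$ is a genuine locally integrable function and (3.3) is a linear random ODE rather than an It\^o equation; no It\^o/Stratonovich correction term appears and ordinary calculus applies along each path. Accordingly I would first pass to (3.4), dividing by $\mathscr{I}_c(Z,\widehat{\mathbf{E}}_3)$ — legitimate because a linear homogeneous ODE with locally integrable coefficient and strictly positive initial datum has a strictly positive solution — to obtain the exact differential $d\log\mathscr{I}_c(Z,\widehat{\mathbf{E}}_3)=-\mathsf{\Sigma}_{A}\,dZ-\alpha\mathsf{\Sigma}_{A}\mathscr{G}(Z)\,dZ$.

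Next I would integrate from $0$ to $Z$ using the flat incident datum $\mathscr{I}_c(0,\widehat{\mathbf{E}}_3)=\mathbf{I}_o$, which gives $\log\!\big(\mathscr{I}_c(Z,\widehat{\mathbf{E}}_3)/\mathbf{I}_o\big)=-\mathsf{\Sigma}_{A}Z-\alpha\mathsf{\Sigma}_{A}\int_0^Z\mathscr{G}(Z')\,dZ'$, and exponentiate. Identifying $\mathbf{I}_o\exp(-\mathsf{\Sigma}_{A}Z)$ with the deterministic Beer's law solution $\mathbf{I}(Z,\widehat{\mathbf{E}}_3)$ of (1.11) yields the claimed form (3.5); one may also simply verify (3.5) by direct pathwise differentiation. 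Uniqueness is immediate, since any two positive solutions have ratio with vanishing logarithmic derivative, hence constant, hence equal by the boundary condition.

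For the stochastic expectation I would note that $X_Z:=\int_0^Z\mathscr{G}(Z')\,dZ'$ is, by Definition 2.1(3) and Appendix B, itself a mean‑zero Gaussian random variable with finite variance $\sigma_Z^2=\int_0^Z\!\int_0^Z\phi(Z_1,Z_2;\zeta)\,dZ_1\,dZ_2<\infty$, the finiteness being inherited from the regulated correlation $\phi(0,0;\zeta)<\infty$. Since the prefactor $\mathbf{I}_o\exp(-\mathsf{\Sigma}_{A}Z)=\mathbf{I}(Z,\widehat{\mathbf{E}}_3)$ is deterministic it factors out of $\mathbb{E}\langle\cdot\rangle$, leaving $\mathbb{E}\langle\mathscr{I}(Z,\widehat{\mathbf{E}}_3)\rangle=\mathbf{I}(Z,\widehat{\mathbf{E}}_3)\,\mathbb{E}\big\langle\exp(-\alpha\mathsf{\Sigma}_{A}X_Z)\big\rangle$, which is exactly (3.6); the exponential moment is finite precisely because $\sigma_Z^2<\infty$.

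The only genuine subtlety — and the step I would flag — is the pathwise well‑posedness: one must know that $X_Z=\int_0^Z\mathscr{G}(Z')\,dZ'$ exists as an ordinary integral for almost every path and defines a measurable (indeed Gaussian) functional of the noise, so that the order of integration over $[0,Z]$ and expectation may be interchanged when (3.6) is taken. This is precisely what the ``regulated, differentiable, integrable'' hypotheses of Definition 2.1 and Appendix B are designed to guarantee; granting them, the theorem reduces to a one‑line separation of variables. The substantive computation — evaluating $\mathbb{E}\langle\exp(-\alpha\mathsf{\Sigma}_{A}X_Z)\rangle$ through the second‑order‑truncated Van Kampen cumulant expansion and reducing it to the error‑function expression quoted in the abstract — is deferred to the next result.
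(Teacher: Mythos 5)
Your proposal is correct and follows essentially the same route as the paper: separation of variables in the pathwise linear ODE, integration from $0$ to $Z$ with the datum $\mathbf{I}_{o}$, exponentiation to get (3.5), and factoring the deterministic Beer's-law prefactor out of the expectation to get (3.6). The additional care you take — justifying the division by $\mathscr{I}_{c}$ via positivity, noting the absence of an It\^o correction for regulated (non-white) noise, and flagging that $\int_{0}^{Z}\mathscr{G}\,dZ'$ is a well-defined Gaussian functional — is a welcome strengthening of the paper's brief argument rather than a different method.
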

\begin{proof}
Integrating (3.3)
\begin{align}
&\int_{\psi_{o}}^{\widehat{\bm{\psi}(z,\widehat{\mathbf{e}}_{3})}}
\frac{d\widehat{\bm{\Psi}(z,\widehat{\mathbf{e}}_{3})}}{\widehat{\bm{\Psi}(z,\widehat{\mathbf{e}}_{3})}}=\int_{0}^{z}\mathbf{A}(z))dz
=-\int_{0}^{z}(\mathsf{A}+\alpha\mathsf{A}\mathbf{G}(z))dz\nonumber
\\&=-\int_{0}^{z}\mathsf{A}dz-\alpha\mathsf{A}\int_{0}^{z}\mathbf{G}(z)dz
,~~~~~~z\in\mathbb{I\!D}
\end{align}
Then
\begin{align}
\log\widehat{\bm{\psi}(z,\widehat{\mathbf{e}}_{3})}-\log \psi_{o}\equiv
\log\left(\frac{\widehat{\bm{\psi}(z,\widehat{\mathbf{e}}_{3})}}{\psi_{o}}\right)
=-\int_{0}^{z}\mathsf{A}d\bar{z}-\alpha\mathsf{A}\int_{0}^{z}\mathbf{G}(\bar{z})d\bar{z}
\end{align}
Equation (3.4) then follows from taking the exponential of both sides, and (3.5) from taking the expectation.
\end{proof}
Note that since the Gaussian function $\mathbf{G}(z)$ is non-white with a regulated binary correlation, then its derivative exists and can be defined and so regular calculus can be used to find the solution.

The expectation (3.5) can be evaluated via a Van Kampen-type cumulant or cluster expansion \textbf{[60-63]}. This is a useful tool
in evaluating expectations of exponentials of stochastic integrals. Due to $\mathbf{G}(z)$ being Gaussian, all high-order cumulants beyond 2nd order are equal to zero.
\begin{thm}(\textbf{Cumulant expansion of the stochastic exponential})\newline
The expectation of the solution has the following Van-Kampen-type cumulant expansion
\begin{align}
&\mathbb{I}(z,\widehat{\mathbf{e}}_{3})=\bigg\lbrace\widehat{\bm{\psi}(z,\widehat{\mathbb{e}}_{3})}\bigg\rbrace=\psi_{o}\exp(-\mathsf{A}~z)
\mathbb{E}\left\lbrace\exp\left(-\alpha\mathsf{A}\int_{0}^{z}{\mathbf{G}}(z)dz\right)\right\rbrace\nonumber\\&
=\psi_{o}\exp(-\mathsf{A}~z)\exp\left(\sum_{m=1}^{\infty}\frac{-(\alpha\mathsf{A})^{m}}{m!}\mathlarger{\int}_{\mathbb{D}}
\bm{\mathcal{D}}_{m}[z]{\mathbb{K}}\left\lbrace\prod_{q=0}^{m}\otimes{\mathbf{G}}(z)\right\rbrace\right)\nonumber\\&
\equiv\psi_{o}\exp(-\mathsf{A}~z)\exp\left(\sum_{m=1}^{\infty}\frac{-(\alpha\mathsf{A})^{m}}{m!}\mathlarger{\int}_{0}^{z}...\mathlarger{\int}_{0}^{z_{m-1}}dz_{1}...dz_{m}
\mathbb{K}\bigg\lbrace{\mathbf{G}}(z_{1})\otimes...\otimes{\mathbf{G}}(z_{m})\bigg\rbrace\right)
\end{align}
\end{thm}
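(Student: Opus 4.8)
The plan is to read the stochastic factor appearing in $(3.6)$ as the moment (Laplace) functional of a single scalar random variable and to expand its logarithm. Set
\[
\mathscr{Y}(Z)=\int_{0}^{Z}\mathscr{G}(\bar Z)\,d\bar Z ,
\]
which by Definition~2.1(3) and Appendix~B is a bona fide random variable possessing finite moments of every order, because on the compact interval $[0,Z]\subset\mathbb{I\!D}$ the correlation $\phi(\cdot,\cdot;\zeta)$ is bounded by $\bm{\mathsf{C}}$. Since the SDE $(3.3)$ is scalar — the coefficient $\mathscr{A}(Z)$ commutes with itself at different depths — there is no path ordering to carry, so the Van Kampen ordered-cumulant construction collapses to the ordinary cumulant generating function of $\mathscr{Y}(Z)$. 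First I would invoke its defining property,
\[
\log\,\mathbb{E}\Big\langle\exp\!\big(-\alpha\mathsf{\Sigma}_{A}\,\mathscr{Y}(Z)\big)\Big\rangle
=\sum_{m=1}^{\infty}\frac{1}{m!}\,\kappa_{m}\!\big(-\alpha\mathsf{\Sigma}_{A}\,\mathscr{Y}(Z)\big)
=\sum_{m=1}^{\infty}\frac{(-\alpha\mathsf{\Sigma}_{A})^{m}}{m!}\,\kappa_{m}\!\big(\mathscr{Y}(Z)\big),
\]
the last equality by homogeneity of cumulants; this is legitimate because $t\mapsto\mathbb{E}\langle e^{t\mathscr{Y}(Z)}\rangle$ is finite near $t=0$ (entire, in the Gaussian case), so its logarithm is analytic there with Taylor coefficients the cumulants $\kappa_m$, and in general $(3.9)$ is to be read as an identity of formal power series in $\alpha$.

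Next I would resolve each cumulant of the integrated field into a spatial integral of the $m$-point cumulant of $\mathscr{G}$. Since $\mathscr{Y}(Z)$ is an integral of the field and cumulants are multilinear and continuous in each slot, one has
\[
\kappa_{m}\!\big(\mathscr{Y}(Z)\big)=\int_{0}^{Z}\!\!\cdots\!\!\int_{0}^{Z}\mathbb{K}\Big\langle\mathscr{G}(Z_{1})\otimes\cdots\otimes\mathscr{G}(Z_{m})\Big\rangle\,dZ_{1}\cdots dZ_{m}\;\equiv\;\int_{\mathbb{I\!D}}\mathbf{D}_{m}[Z]\;\mathbb{K}\Big\langle\prod_{q=1}^{m}\otimes\,\mathscr{G}(Z_{q})\Big\rangle ,
\]
which I would establish by first writing $\mathbb{E}\langle\mathscr{Y}(Z)^{n}\rangle=\int_{[0,Z]^{n}}\mathbb{E}\langle\mathscr{G}(Z_{1})\otimes\cdots\otimes\mathscr{G}(Z_{n})\rangle\,dZ_{1}\cdots dZ_{n}$ (a Fubini interchange, permitted by the finiteness and boundedness noted above) and then feeding these moments through the standard moment–cumulant relation. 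Substituting into the series of the first paragraph already produces the cube form of $(3.9)$; and since $\mathbb{K}\langle\mathscr{G}(Z_{1})\otimes\cdots\otimes\mathscr{G}(Z_{m})\rangle$ is symmetric under permutations of its arguments, the integral over $[0,Z]^{m}$ equals the iterated integral over the ordered region $0\le Z_{m}\le\cdots\le Z_{1}\le Z$, which yields the second displayed form. Finally, multiplying through by the deterministic Beer prefactor $\mathbf{I}_{o}\exp(-\mathsf{\Sigma}_{A}Z)$ supplied by Theorem~3.2 gives $(3.9)$ in the stated shape.

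The hard part is not structural but the analytic justification of the two interchanges — expectation with the spatial integral, and logarithm with the infinite sum — together with the convergence of the resulting $\alpha$-series. For the Gaussian field actually used here this difficulty essentially evaporates: $\mathscr{Y}(Z)$ is then centred Gaussian with variance $\sigma^{2}(Z)=\int_{0}^{Z}\!\int_{0}^{Z}\phi(Z_{1},Z_{2};\zeta)\,dZ_{1}dZ_{2}<\infty$, so $\mathbb{E}\langle e^{-\alpha\mathsf{\Sigma}_{A}\mathscr{Y}(Z)}\rangle=\exp\!\big(\tfrac12\alpha^{2}\mathsf{\Sigma}_{A}^{2}\sigma^{2}(Z)\big)$ exactly; each interchange is then a dominated-convergence statement on a compact domain with a bounded integrand, and the cumulant series terminates at $m=2$ because the first cumulant vanishes ($\mathbb{M}_{1}\equiv 0$) and every cumulant of $\mathscr{G}$ of order $\ge 3$ vanishes for a Gaussian field — this is exactly the ``Gaussian dominance'' truncation exploited in the next section to obtain the closed-form modified Beer's law stated in the abstract. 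For a general regulated (non-white, hence genuinely non-Gaussian) field one retains the whole series; each $\kappa_{m}(\mathscr{Y}(Z))$ is then a finite polynomial combination of the integrated moments $\int_{\mathbb{I\!D}^{j}}\mathbb{E}\langle\mathscr{G}(Z_{1})\otimes\cdots\otimes\mathscr{G}(Z_{j})\rangle$ with $j\le m$, each finite on the compact slab, so $(3.9)$ is at worst an asymptotic expansion in $\alpha$ and is a genuine convergent identity once mild growth control on the higher cumulants is imposed.
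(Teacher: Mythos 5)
Your route is essentially the paper's route: both rest on the identity $\log\bm{\Psi}_{M}=\bm{\Psi}_{C}$ for the integrated field, i.e.\ on reading $\mathbb{E}\langle\exp(-\alpha\mathsf{\Sigma}_{A}\int_{0}^{Z}\mathscr{G})\rangle$ as a moment generating functional and exponentiating its cumulant generating functional. The paper simply writes down the two formal series $\bm{\Psi}_{M}$ and $\bm{\Psi}_{C}$ (with a normal-ordering symbol $\overrightarrow{\mathbf{N}}$ doing no real work in this commuting scalar setting), asserts the log relation, and identifies $\bm{\Psi}_{M}$ with the stochastic exponential; you supply the missing justifications (reduction to the single random variable $\mathscr{Y}(Z)$, multilinearity of cumulants, Fubini on the compact slab, analyticity of the Laplace transform near the origin, termination at $m=2$ in the Gaussian case). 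That added rigour is all to the good.

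There is, however, one concrete slip in your passage from the cube to the simplex: for a permutation-symmetric integrand one has
\begin{align}
\int_{[0,Z]^{m}}\mathbb{K}\Big\langle\mathscr{G}(Z_{1})\otimes\cdots\otimes\mathscr{G}(Z_{m})\Big\rangle\,dZ_{1}\cdots dZ_{m}
= m!\int_{0}^{Z}\!\!\cdots\!\!\int_{0}^{Z_{m-1}}\mathbb{K}\Big\langle\mathscr{G}(Z_{1})\otimes\cdots\otimes\mathscr{G}(Z_{m})\Big\rangle\,dZ_{1}\cdots dZ_{m},
\end{align}
not equality as you claim. Carrying the $1/m!$ of the cumulant series onto the ordered iterated integral therefore costs a factor $m!$; done correctly, the ordered form of the expansion has coefficient $(-\alpha\mathsf{\Sigma}_{A})^{m}$ with no $1/m!$, and at second order the term is $\alpha^{2}\mathsf{\Sigma}_{A}^{2}\int_{0}^{Z}\!\int_{0}^{Z_{1}}\phi\,dZ_{2}\,dZ_{1}$ rather than half of it. This is not a purely cosmetic point, since the factor of $2$ propagates into the closed-form modified Beer's law. (To be fair, the paper's own statement of $(3.9)$ carries the same ambiguity --- it attaches $1/m!$ to an ordered integral --- and its downstream formulas in Lemma~3.4 and in \S 3.1 disagree by exactly this factor of $2$; so you have faithfully reproduced the target formula, but by an identification that is not an identity.) To repair your write-up, either keep the $1/m!$ and integrate over the full cube $[0,Z]^{m}$, or drop the $1/m!$ and integrate over the ordered region, and state explicitly which convention the final second-order coefficient uses.
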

\begin{proof}
Given a set of planes $(z_{1},...z_{m})\in\mathbb{I\!D}\subset\mathbb{R}^{(+)}$, the mth-order moments and cumulants for the GRFs are given by (2.7) and (2.8) so that
\begin{align}
&\overrightarrow{\mathbf{N}}\mathbb{E}\bigg\lbrace{\mathbf{G}}(z_{1})\otimes...\otimes{\mathbf{G}}(z_{m})\bigg\rbrace
=\overrightarrow{\mathbf{N}}\mathbb{E}\left\lbrace\prod_{q=1}^{m}{\mathbf{G}}(z_{q})\right\rbrace\\&
\overrightarrow{\mathbf{N}}\mathbb{K}\bigg\lbrace{\mathbf{G}}(z_{1})\otimes...\otimes{\mathbf{G}}(z_{m})\bigg\rbrace
=\overrightarrow{\mathbf{N}}\mathbb{K}\left\lbrace\prod_{q=1}^{m}{\mathbf{G}}(z_{q})\right\rbrace\\&
\overrightarrow{\mathbf{N}}\mathbb{K}\bigg\lbrace{\mathbf{G}}(z_{1})\otimes...\otimes{\mathbf{G}}(z_{m})\bigg\rbrace
=\overrightarrow{\mathbf{N}}\mathbb{K}\left\lbrace\prod_{q=1}^{m}{\mathbf{G}}(z_{q})\right\rbrace_{m\ge 3}=0
\end{align}
where $\overrightarrow{\mathbf{N}}$ is a 'normal ordering operator'. For example $\overrightarrow{\mathbf{N}} f(t_{2})f(t_{3})f(t_{1})=f(t_{1})f(t_{2})f(t_{3})$ for $t_{1}<t_{2}<t_{3}$ and so on, for any function $f(t)$.

The moment generating function (MGF) $\mathlarger{\mathscr{M}}[\mathbf{G}(z)]$ and the cumulant-generating function [CGF]
$\mathlarger{\mathscr{C}}[\mathbf{G}(z)]$ are given by
\begin{align}
&\mathlarger{\mathscr{M}}[\mathbf{G}(z)]=\sum_{m=0}^{\infty}\frac{\beta^{m}}{m!}\int_{0}^{z}...\int_{0}^{z_{m-1}}dz_{1}...dz_{m}\overrightarrow{\mathbf{N}}
\mathbb{E}\left\lbrace\prod_{q=1}^{m}{\mathbf{G}}(z_{q})
\right\rbrace\\&
\mathlarger{\mathscr{C}}[{\mathbf{G}}(z)]=\sum_{m=1}^{\infty}\frac{\beta^{m}}{m!}\int_{0}^{z}...\int_{0}^{z_{m-1}}dz_{1}...dz_{m}
\overrightarrow{\mathbf{N}}\mathbb{K}\left\lbrace\prod_{q=1}^{m}{\mathbf{G}}(z_{q})
\right\rbrace
\end{align}
where $\beta$ is an arbitrary constant. It is important to note that the summation in (3.13) begins from $m=1$ and not $m=0$. These can also be written as
\begin{align}
&\mathlarger{\mathscr{M}}[\mathbf{G}(z)]=\sum_{m=0}^{\infty}\frac{\mathlarger{\beta}^{m}}{m!}\int\bm{\mathcal{D}}_{m}[z]\overrightarrow{\mathbf{N}}
\mathbb{E}\left\lbrace\prod_{q=1}^{m}{\mathbf{G}}(z_{q})
\right\rbrace\\&
\mathlarger{\mathscr{C}}[\mathbf{G}(z)]=\sum_{m=1}^{\infty}\frac{\mathlarger{\beta}^{m}}{m!}\int\bm{\mathcal{D}}_{m}[z]
\overrightarrow{\mathbf{N}}\mathbb{K}\left\lbrace\prod_{q=1}^{m}{\mathbf{G}}(z_{q})
\right\rbrace
\end{align}
where $\int\bm{\mathcal{D}}_{m}(z)=\int...\int dz_{1}...dz_{m}$ is a 'path integral'. Choosing ${\beta}=-\alpha\mathsf{A}$
\begin{align}
&\mathlarger{\mathscr{M}}[\mathbf{G}(z)]=\sum_{m=0}^{\infty}\frac{(-1)^{m}(\alpha\mathsf{A})^{m}}{m!}\int\bm{\mathcal{D}}_{m}[z]\overrightarrow{\mathbf{N}}\mathbb{E}\left\lbrace\prod_{q=1}^{m}
{\mathbf{G}}(z_{q})\right\rbrace\\&
\mathlarger{\mathscr{C}}[\mathbf{G}(z)]=\sum_{m=1}^{\infty}\frac{(-1)^{m}(\alpha\mathsf{A})^{m}}{m!}\int\bm{\mathcal{D}}_{m}[z]\overrightarrow{\mathbf{N}}\mathbb{K}\left\lbrace\prod_{q=1}^{m}
{\mathbf{G}}(z_{q})
\right\rbrace
\end{align}
The relation between the MGF and the CGF is \textbf{[60,62]}
\begin{align}
\log\mathlarger{\mathscr{M}}[\mathbf{G}(z)]=\mathlarger{\mathscr{C}}[\mathbf{G}(z)]
\end{align}
so that
\begin{align}
\mathlarger{\mathscr{M}}[\mathbf{G}(z)]=\exp\big(\mathlarger{\mathscr{C}}[\mathbf{G}(z)]\big)
\end{align}
Hence
\begin{align}
&\sum_{m=0}^{\infty}\frac{(-1)^{m}(\alpha\mathsf{A})^{m}}{m!}\int\bm{\mathcal{D}}_{m}[z]\overrightarrow{\mathbf{N}}\mathbb{E}\left\lbrace\prod_{q=1}^{m}{\mathbf{G}}(z_{q})
\right\rbrace\\&
=\exp\left(\sum_{m=1}^{\infty}\frac{(-1)^{m}(\alpha\mathsf{A})^{m}}{m!}\int\bm{\mathcal{D}}_{m}[z]\overrightarrow{\mathbf{N}}\mathbb{K}\left\lbrace
\prod_{q=1}^{m}\mathbf{G}(z_{q})
\right\rbrace\right)
\end{align}
However
\begin{align}
&\mathlarger{\mathscr{M}}[\mathbf{G}(z)]\equiv\mathbb{E}\left\lbrace\exp\left(-\alpha\mathsf{A}\int_{0}^{z}{\mathbf{G}}(\bar{z})d\bar{z}\right)\right\rbrace\nonumber\\&
=\sum_{m=0}^{\infty}\frac{(-1)^{m}(\alpha\mathsf{A})^{m}}{m!}\int\bm{\mathcal{D}}_{m}[z]\overrightarrow{\mathbf{N}}\mathbb{E}\left\lbrace
\prod_{q=1}^{m}{\mathbf{G}}(z_{q})
\right\rbrace
\end{align}
giving a perturbation series in terms of cumulants to all orders.
\begin{align}
&\mathlarger{\mathscr{M}}[\mathbf{G}(z)]\equiv \mathbb{E}\left\lbrace\exp\left(-\alpha\mathsf{A}\int_{0}^{z}\mathbf{G}(\bar{z})d\bar{z}\right)\right\rbrace\nonumber\\&
=\sum_{m=0}^{\infty}\frac{(-1)^{m}(\alpha\mathsf{A})^{m}}{m!}\int\bm{\mathcal{D}}_{m}[z]\overrightarrow{\mathbf{N}}\mathbb{E}\left\lbrace\prod_{q=1}^{m}{\mathbf{G}}(z_{q})
\right\rbrace\nonumber\\&
=\exp\left(\sum_{m=1}^{\infty}\frac{(-1)^{m}(\alpha\mathsf{A})^{m}}{m!}\int\bm{\mathcal{D}}_{m}[z]\overrightarrow{\mathbf{N}}\mathbb{K}\left\lbrace\prod_{q=1}^{m}{\mathbf{G}}(z_{q})
\right\rbrace\right)
\end{align}
The solution is then (3.8).
\end{proof}
With Gaussian random functions, the summation only needs to be taken to 2nd order (m=2) since all higher-order cumulants are zero. That is
\begin{align}
\sum_{m=3}^{\infty}\frac{(-1)^{m}(\alpha\mathsf{A})^{m}}{m!}\int\bm{\mathcal{D}}_{m}[z]\overrightarrow{\mathbf{N}}\mathbb{K}\left\lbrace\prod_{q=1}^{m}{\mathbf{G}}(z_{q})
\right\rbrace=0
\end{align}
\begin{lem}
\begin{align}
&\mathbb{I}(z,\widehat{\mathbf{e}}_{3})=\mathbb{E}\big\lbrace\widehat{\bm{\psi}(z,\widehat{\mathbf{e}}_{3})}\big\rbrace=\psi_{o}
\exp(-\mathsf{A}~z)\mathbb{E}\left\lbrace
\exp\left(-\alpha\mathsf{A}\int_{0}^{z}{\mathbf{G}}(z)dz\right)\right\rbrace\nonumber\\&=\psi_{o}\exp(-\mathsf{A}~z)
\exp\left(\alpha^{2}\mathsf{A}^{2}\int_{0}^{z}\int_{0}^{z_{2}}dz_{1}dz_{2}\phi(z_{1},z_{2};\xi)\right)
\end{align}
\end{lem}
\begin{proof}
Expanding (3.8) to second order, all high-order terms vanish
\begin{align}
&\mathbb{I}(z,\widehat{\mathbf{e}}_{3})=\big\lbrace\widehat{\bm{\psi}(z,\widehat{\mathbf{e}}_{3})}\bigg\rbrace=\psi_{o}\exp(-\mathsf{A}~z)\mathbb{E}\left\lbrace
\exp\left(-\alpha\mathsf{A}\int_{0}^{z}{\mathbf{G}}(z)dz\right)\right\rbrace\nonumber\\&
=\psi_{o}\exp(-\mathsf{A}~z)\exp\left(\sum_{m=1}^{\infty}\frac{(-1)^{m}(\alpha\mathsf{A})^{m}}{m!}\int\bm{\mathcal{D}}_{m}[z]\overrightarrow{\mathbf{N}}\mathbb{K}\left\lbrace\prod_{q=1}^{m}
\mathbf{G}(z_{q})\right\rbrace\right)\nonumber\\&\nonumber\\&
=\psi_{o}\exp(-\mathsf{A}~z)\exp\left(\sum_{m=1}^{2}\frac{(-1)^{m}(\alpha\mathsf{A})^{m}}{m!}\int\bm{\mathcal{D}}_{m}[z]\overrightarrow{\mathbf{N}}\mathbb{K}\left\lbrace\prod_{q=1}^{m}
\mathbf{G}(z_{q})\right\rbrace\right)\nonumber\\&
=\psi_{o}\exp(-\mathsf{A}~z)\exp\bigg(-\alpha\mathsf{A}\int_{0}^{z}dz_{1}\mathbb{K}\big\lbrace{\mathbf{G}}(z)\big\rbrace\nonumber\\&
+\alpha^{2}\mathsf{A}^{2}\int_{0}^{z}\int_{0}^{z_{1}}dz_{1}dz_{2}\mathbb{K}\big\lbrace{\mathbf{G}}(z_{1})\otimes{\mathbf{G}}(z_{2})
\big\rbrace\bigg)\nonumber\\&
=\psi_{o}\exp(-\mathsf{A}~z)\exp\bigg(-\alpha\mathsf{A}\int_{0}^{z}dz_{1}\mathbf{E}\big\lbrace{\mathbf{G}}(z)\big\rbrace\nonumber\\&
+\alpha^{2}\mathsf{A}^{2}\int_{0}^{z}\int_{0}^{z_{1}}dz_{1}dz_{2}\mathbb{E}\big\lbrace{\mathbf{G}}(z_{1}){\otimes}{\mathbf{G}}(z_{2})
\big\rbrace\bigg)\nonumber\\&
=\psi_{o}\exp(-\mathsf{A}~z)\exp\left(\alpha^{2}\mathsf{A}^{2}\int_{0}^{z}\int_{0}^{z_{1}}dz_{1}dz_{2}\mathbb{E}\big\lbrace{\mathbf{Q}}(z_{1})
\otimes{\mathbf{Q}}(z_{2})
\big\rbrace\right)\nonumber\\&
=\psi_{o}\exp(-\mathsf{A}~z)\exp\left(\alpha^{2}\mathsf{A}^{2}\int_{0}^{z}\int_{0}^{z_{1}}dz_{1}dz_{2}
\phi(z_{1},z_{2};\xi)\right)
\end{align}
since $\mathbb{K}\lbrace\mathbf{G}(z_{1})\rbrace\equiv \mathbb{E}\lbrace\mathbf{G}(z_{1})\rbrace=0$ and $\mathbb{K}\lbrace\mathbf{G}(z_{1})\otimes\mathbf{G}(z_{2})\rbrace
=\mathbb{E}\lbrace\mathbf{G}(z_{1})\otimes\mathbf{G}(z_{2})\rbrace=\phi(z_{1},z_{2};\xi)$
\end{proof}
\begin{cor}
\begin{align}
\widehat{\bm{\psi}(z,\widehat{\mathbf{e}}_{3})}\longrightarrow\psi_{o}\exp(-\mathsf{A}~z)
\end{align}
as $\alpha\rightarrow 0$ or $\mathbf{G}(z)\rightarrow 0$ and the original Beer's law solution is recovered.
\end{cor}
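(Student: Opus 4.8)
The plan is to read off the conclusion directly from the closed second-order expression for the averaged intensity obtained in the preceding Lemma, namely
\begin{align}
\mathbb{I}(Z,\widehat{\mathbf{E}}_{3})=\mathbf{I}_{o}\exp(-\mathsf{\Sigma}_{A}Z)\exp\left(\alpha^{2}\mathsf{\Sigma}_{A}^{2}\int_{0}^{Z}\int_{0}^{Z_{1}}\phi(Z_{1},Z_{2};\zeta)\,dZ_{2}\,dZ_{1}\right),
\end{align}
and to show that in each of the two stated regimes the second (stochastic) exponential factor collapses to unity, leaving exactly the deterministic Beer's law $\mathbf{I}_{o}\exp(-\mathsf{\Sigma}_{A}Z)$.

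First I would establish that the exponent is a finite, nonnegative quantity that is uniformly controlled on the slab. Since the Gaussian-decay kernel satisfies $0<\phi(Z_{1},Z_{2};\zeta)=\bm{\mathsf{C}}\exp(-|Z_{1}-Z_{2}|^{2}\zeta^{-2})\le\bm{\mathsf{C}}$ for every pair $Z_{1},Z_{2}\in\mathbb{I\!D}$, the iterated integral over the simplex $0\le Z_{2}\le Z_{1}\le Z$ is bounded by $\tfrac12\bm{\mathsf{C}}Z^{2}$. Writing $\Theta(Z;\alpha,\zeta)=\alpha^{2}\mathsf{\Sigma}_{A}^{2}\int_{0}^{Z}\int_{0}^{Z_{1}}\phi(Z_{1},Z_{2};\zeta)\,dZ_{2}\,dZ_{1}$ for the exponent, we thus have $0\le\Theta(Z;\alpha,\zeta)\le\tfrac12\alpha^{2}\mathsf{\Sigma}_{A}^{2}\bm{\mathsf{C}}Z^{2}<\infty$, so that $\exp(\Theta)$ is a well-defined factor $\ge 1$.

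The two limits now follow. For $\alpha\to 0$ the explicit bound gives $\Theta(Z;\alpha,\zeta)\to 0$, hence by continuity of the exponential $\exp(\Theta)\to 1$ and $\mathbb{I}(Z,\widehat{\mathbf{E}}_{3})\to\mathbf{I}_{o}\exp(-\mathsf{\Sigma}_{A}Z)$. For $\mathscr{G}(Z)\to 0$ the covariance kernel itself degenerates, $\phi(Z_{1},Z_{2};\zeta)=\mathbb{E}\langle\mathscr{G}(Z_{1})\otimes\mathscr{G}(Z_{2})\rangle\to 0$ pointwise on $\mathbb{I\!D}\times\mathbb{I\!D}$, so, the integrand being dominated by the constant $\bm{\mathsf{C}}$ on a bounded interval, dominated convergence yields $\Theta(Z;\alpha,\zeta)\to 0$, and again $\exp(\Theta)\to 1$, recovering $\mathbf{I}_{o}\exp(-\mathsf{\Sigma}_{A}Z)$. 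There is essentially no obstacle here: the only point requiring a line of care is the finiteness and uniform boundedness of $\Theta$, which is precisely the bound $\Theta\le\tfrac12\alpha^{2}\mathsf{\Sigma}_{A}^{2}\bm{\mathsf{C}}Z^{2}$ above, guaranteeing that the assertion ``the exponential factor tends to $1$'' is legitimate rather than merely formal; optionally one may also substitute the closed form of $\Theta$ in terms of the error function from the Lemma and take the limits there, but this is not needed for the statement as given.
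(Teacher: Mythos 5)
Your proposal is correct and follows essentially the same route as the paper, which offers no separate argument for this corollary beyond observing that the stochastic exponential factor $\exp\left(\alpha^{2}\mathsf{\Sigma}_{A}^{2}\int_{0}^{Z}\int_{0}^{Z_{1}}\phi(Z_{1},Z_{2};\zeta)\,dZ_{2}\,dZ_{1}\right)$ in the preceding Lemma reduces to unity when $\alpha\rightarrow 0$ or the noise (hence its covariance $\phi$) vanishes. Your added bound $0\le\Theta\le\tfrac12\alpha^{2}\mathsf{\Sigma}_{A}^{2}\bm{\mathsf{C}}Z^{2}$ is a harmless extra precaution that the paper omits but does not change the argument.
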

\begin{lem}
The stochastic average $\mathbb{I}(z,\widehat{\mathbf{e}}_{3})$ is a solution of the ordinary differential equation
\begin{align}
\frac{d\mathbb{I}(z,\widehat{\mathbf{e}}_{3})}{dz}=\mathsf{A}\big(\alpha^{2}\mathsf{A}\theta(z)-1\big)\mathbb{I}(z,\widehat{\mathbf{e}}_{3})
\end{align}
where $\theta(z)$ is the integral
\begin{align}
\theta(z_{1})=\int_{0}^{z_{1}}\phi(z_{1},z_{2};\xi)dz_{2}
\end{align}
and so Beer's Law is recovered for $\alpha=0$.
\end{lem}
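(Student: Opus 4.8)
The plan is to derive the stated ODE by straightforward differentiation of the closed form for $\mathbb{I}(Z,\widehat{\mathbf{E}}_{3})$ obtained in the preceding Lemma,
\begin{align*}
\mathbb{I}(Z,\widehat{\mathbf{E}}_{3})=\mathbf{I}_{o}\exp(-\mathsf{\Sigma}_{A}Z)\exp\!\left(\alpha^{2}\mathsf{\Sigma}_{A}^{2}\int_{0}^{Z}\int_{0}^{Z_{1}}\phi(Z_{1},Z_{2};\zeta)\,dZ_{2}\,dZ_{1}\right).
\end{align*}
First I would fix notation by writing $\theta(Z_{1})=\int_{0}^{Z_{1}}\phi(Z_{1},Z_{2};\zeta)\,dZ_{2}$ for the inner integral (the upper limit in the Lemma's display of $\theta$ being read as $Z_{1}$) and $\Psi(Z)=\alpha^{2}\mathsf{\Sigma}_{A}^{2}\int_{0}^{Z}\theta(Z_{1})\,dZ_{1}$ for the full exponent, so that $\mathbb{I}(Z,\widehat{\mathbf{E}}_{3})=\mathbf{I}_{o}\exp\!\big(-\mathsf{\Sigma}_{A}Z+\Psi(Z)\big)$.

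The second step is the chain rule, which gives $\dfrac{d}{dZ}\mathbb{I}(Z,\widehat{\mathbf{E}}_{3})=\big(-\mathsf{\Sigma}_{A}+\Psi'(Z)\big)\,\mathbb{I}(Z,\widehat{\mathbf{E}}_{3})$, so the whole computation reduces to evaluating $\Psi'(Z)$. Since $\phi(Z_{1},Z_{2};\zeta)=\bm{\mathsf{C}}\exp(-|Z_{1}-Z_{2}|^{2}\zeta^{-2})$ is continuous (indeed smooth) on $\mathbb{I\!D}\times\mathbb{I\!D}$, the function $Z_{1}\mapsto\theta(Z_{1})$ is continuous, so the fundamental theorem of calculus applied to the outer integral yields $\Psi'(Z)=\alpha^{2}\mathsf{\Sigma}_{A}^{2}\,\theta(Z)$. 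Substituting this back and pulling out a common factor of $\mathsf{\Sigma}_{A}$ produces
\begin{align*}
\frac{d}{dZ}\mathbb{I}(Z,\widehat{\mathbf{E}}_{3})=\mathsf{\Sigma}_{A}\big(\alpha^{2}\mathsf{\Sigma}_{A}\,\theta(Z)-1\big)\,\mathbb{I}(Z,\widehat{\mathbf{E}}_{3}),
\end{align*}
which is the claim; and at $\alpha=0$ the bracket collapses to $-1$, leaving $d\mathbb{I}/dZ=-\mathsf{\Sigma}_{A}\mathbb{I}$, which together with $\mathbb{I}(0,\widehat{\mathbf{E}}_{3})=\mathbf{I}_{o}$ integrates back to Beer's law $\mathbb{I}(Z,\widehat{\mathbf{E}}_{3})=\mathbf{I}_{o}\exp(-\mathsf{\Sigma}_{A}Z)$.

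There is no serious obstacle here: the entire argument is the chain rule plus one application of the fundamental theorem of calculus, and the only point worth a sentence of justification is the step $\frac{d}{dZ}\int_{0}^{Z}\theta(Z_{1})\,dZ_{1}=\theta(Z)$, i.e. that the iterated Gaussian-correlation integral is differentiable in its upper endpoint — which is guaranteed by the continuity of $\phi$ noted above, or alternatively can be read off directly from the explicit $\mathrm{Erf}$-expression for $\theta$ recorded in the abstract. As a consistency check I would also remark in passing that the same ODE follows from taking the expectation of $d\mathscr{I}=-\mathscr{A}(Z)\mathscr{I}\,dZ$ and closing the resulting moment hierarchy at second cumulant order, in agreement with the Van Kampen expansion established above; but the direct differentiation of the closed form is the cleanest route and is the one I would write out.
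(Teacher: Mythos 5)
Your proposal is correct and follows essentially the same route as the paper: differentiate the closed-form expression from the preceding lemma via the chain rule, using the fundamental theorem of calculus to get $\frac{d}{dZ}\int_{0}^{Z}\theta(Z_{1})\,dZ_{1}=\theta(Z)$, and factor out $\mathsf{\Sigma}_{A}$. Your version is in fact slightly more careful than the paper's, since you explicitly justify the differentiability of the iterated integral and correctly read the upper limit in the definition of $\theta(Z_{1})$ as $Z_{1}$.
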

\begin{proof}
Taking the derivative of (3.25) gives
\begin{align}
&\frac{d\mathbb{I}(z,\widehat{\mathbf{e}}_{3}}{dz}=-\mathsf{A}\psi_{o}\exp(-\mathsf{A}z)\exp\left(\alpha^{2}\mathsf{A}^{2}\int_{0}^{z}\theta(z_{1}) dz_{1}\right)\nonumber\\&+\psi_{o}\exp(-\mathsf{A}z)\alpha^{2}\mathsf{A}^{2}\theta(z)\exp\left(\alpha^{2}\mathsf{A}^{2}\int_{0}^{z}\theta(z_{1}) dz_{1}\right)
\end{align}
and so (3.28) follows
\end{proof}
\subsection{Estimate of $\mathbb{E}\big\lbrace\widehat{\bm{\psi}(z,\widehat{\mathbf{e}}_{3})}\big\rbrace $ for a Gaussian decay correlation}
The expression  is now explicitely computed for a Gaussian correlation function
\begin{align}
&\mathbb{E}\big\lbrace\widehat{\bm{\psi}(z,\widehat{\mathbf{e}})}\big\rbrace\equiv\mathbb{I}(z,\widehat{\mathbf{e}}_{3})=
\psi_{o}\exp(-\mathsf{A}z)\exp\left(\frac{1}{2}\alpha^{2}\mathsf{A}^{2}
\bm{\mathsf{C}}\int_{0}^{z}\int_{0}^{z_{1}}\phi(z_{1},z_{2};\xi)dz_{1}dz_{2}\right)\nonumber\\&=\psi_{o}\exp(-\mathsf{A}z)
\exp\left(\frac{1}{2}\alpha^{2}\mathsf{A}^{2}
\bm{\mathsf{C}}\int_{0}^{z}\int_{0}^{z_{1}}\exp(-|z_{1}-z_{2}|^{2}\xi^{-2}dz_{1}dz_{2}\right)\nonumber\\&
=\psi_{o}\exp(-\mathsf{A}z)\exp\left(\frac{1}{2}\alpha^{2}\mathsf{A}^{2}
\bm{\mathsf{C}}{\mathcal{Y}}(z)\right)=\psi_{o}\exp(-\mathsf{A}z){\mathcal{B}}(\alpha,\mathsf{A},z)
\end{align}
This is then Beer's law of exponential attenuation $\psi_{o}\exp(-\mathsf{A} z)$ multiplied by a small 'boost' factor $\mathcal{B}(\alpha,\mathsf{A},z)$.
Note that because $\alpha$ is small for small fluctuations about the mean with $\alpha\sim 0$ or $\alpha\ll 1$, the exponential term $\exp\left(\frac{1}{2}\alpha^{2}\mathsf{A}^{2}\bm{\mathsf{C}}{\mathcal{Y}}(z)\right)$ will be close to one. It just remains to
compute the double integral over the Gaussian, namely $\mathcal{Y}(z)$.
\begin{lem}
\begin{align}
{\mathcal{Y}}(z)=\int_{0}^{z}\int_{0}^{z_{1}}\exp(-|z_{1}-z_{2}|^{2}\xi^{-2})dz_{1}dz_{2}
\end{align}
then
\begin{align}
{\mathcal{Y}}(z)=\int_{0}^{z}\bigg|\int_{0}^{z_{1}}\exp(-|z_{1}-z_{2}|^{2}\xi^{-2})dz_{2}\bigg|dz_{1}=\int_{0}^{z}{\mathcal{W}}(z_{1})dz_{1}
\end{align}
The evaluated integral is then
\begin{align}
{\mathcal{Y}}(z)=\int_{0}^{z}\mathcal{W}(z_{1})dz_{1}=\frac{1}{2}\xi\left[\exp(-z^{2}/\xi^{2})\left(\sqrt{\pi}z Erf\left(\frac{z}{\xi}\right)\exp\left(\frac{z^{2}}{\xi^{2}}\right)+{\xi}\right)-{\xi}\right]
\end{align}
\end{lem}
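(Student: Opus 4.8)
The plan is a direct two-stage quadrature, the only subtlety being a remark on why the absolute-value signs in the statement are inert. First I would collapse the inner $Z_{2}$-integral to an error function by a linear substitution; then integrate the resulting function of $Z_{1}$ using the standard antiderivative of $\mathrm{Erf}$; and finally reconcile the outcome with the closed form displayed. Throughout, the integrand $\exp(-|Z_{1}-Z_{2}|^{2}\zeta^{-2})$ is continuous and bounded by $1$ on the compact triangle $\{0\le Z_{2}\le Z_{1}\le Z\}$, so the iterated integral is finite and the order of integration is immaterial; this same positivity is what makes the absolute values harmless.

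\emph{Inner integral.} Fix $Z_{1}\in[0,Z]$. On the range $0\le Z_{2}\le Z_{1}$ one has $|Z_{1}-Z_{2}|^{2}=(Z_{1}-Z_{2})^{2}$, so the substitution $u=(Z_{1}-Z_{2})/\zeta$ gives
\[
\mathcal{W}(Z_{1})=\int_{0}^{Z_{1}}\exp\!\big(-(Z_{1}-Z_{2})^{2}\zeta^{-2}\big)\,dZ_{2}=\zeta\int_{0}^{Z_{1}/\zeta}e^{-u^{2}}\,du=\frac{\sqrt{\pi}}{2}\,\zeta\,\mathrm{Erf}\!\Big(\frac{Z_{1}}{\zeta}\Big),
\]
where $\mathrm{Erf}(x)=\tfrac{2}{\sqrt{\pi}}\int_{0}^{x}e^{-t^{2}}\,dt$. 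Since $\mathcal{W}(Z_{1})\ge 0$ for all $Z_{1}\ge 0$, we have $|\mathcal{W}(Z_{1})|=\mathcal{W}(Z_{1})$, which is exactly the first identity asserted in the statement, $\mathcal{Y}(Z)=\int_{0}^{Z}\big|\int_{0}^{Z_{1}}\exp(-|Z_{1}-Z_{2}|^{2}\zeta^{-2})\,dZ_{2}\big|\,dZ_{1}$.

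\emph{Outer integral.} Substituting this into $\mathcal{Y}(Z)=\int_{0}^{Z}\mathcal{W}(Z_{1})\,dZ_{1}$ and rescaling $v=Z_{1}/\zeta$ yields $\mathcal{Y}(Z)=\tfrac{\sqrt{\pi}}{2}\zeta^{2}\int_{0}^{Z/\zeta}\mathrm{Erf}(v)\,dv$. I would then use $\int\mathrm{Erf}(v)\,dv=v\,\mathrm{Erf}(v)+\tfrac{1}{\sqrt{\pi}}e^{-v^{2}}+\text{const}$ (checked by differentiating, the $\tfrac{2v}{\sqrt{\pi}}e^{-v^{2}}$ terms cancelling) together with $\mathrm{Erf}(0)=0$ to obtain
\[
\mathcal{Y}(Z)=\frac{\sqrt{\pi}}{2}\zeta^{2}\Big[\frac{Z}{\zeta}\,\mathrm{Erf}\!\Big(\frac{Z}{\zeta}\Big)+\frac{1}{\sqrt{\pi}}e^{-Z^{2}/\zeta^{2}}-\frac{1}{\sqrt{\pi}}\Big]=\frac{\zeta}{2}\Big[\sqrt{\pi}\,Z\,\mathrm{Erf}\!\Big(\frac{Z}{\zeta}\Big)+\zeta e^{-Z^{2}/\zeta^{2}}-\zeta\Big].
\]
Because $e^{-Z^{2}/\zeta^{2}}\big(\sqrt{\pi}\,Z\,\mathrm{Erf}(Z/\zeta)e^{Z^{2}/\zeta^{2}}+\zeta\big)=\sqrt{\pi}\,Z\,\mathrm{Erf}(Z/\zeta)+\zeta e^{-Z^{2}/\zeta^{2}}$, the bracket here coincides with the one displayed in the statement, so $\mathcal{Y}(Z)=\tfrac{\zeta}{2}\big[\exp(-Z^{2}/\zeta^{2})(\sqrt{\pi}Z\,\mathrm{Erf}(Z/\zeta)\exp(Z^{2}/\zeta^{2})+\zeta)-\zeta\big]$; multiplying by the cumulant prefactor $\tfrac12\alpha^{2}\mathsf{\Sigma}_{A}^{2}\bm{\mathsf{C}}$ and exponentiating then reproduces the boost factor $\mathcal{E}(\alpha,\mathsf{\Sigma}_{A},Z)$ and hence the modified Beer's law.

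I do not anticipate any analytic obstacle: once the antiderivative of $\mathrm{Erf}$ is in hand, the computation is elementary. The only points needing care are bookkeeping — keeping the two nested rescalings $u=(Z_{1}-Z_{2})/\zeta$, $v=Z_{1}/\zeta$ and their $\sqrt{\pi}/2$ and $\zeta^{2}$ factors straight — and observing that the right-hand side as printed is the elementary expression $\tfrac{\zeta}{2}\big[\sqrt{\pi}Z\,\mathrm{Erf}(Z/\zeta)+\zeta e^{-Z^{2}/\zeta^{2}}-\zeta\big]$ written in an equivalent but heavier form (with a redundant $e^{-Z^{2}/\zeta^{2}}e^{Z^{2}/\zeta^{2}}$ inside the first term), the outer $\exp(\cdot)$ and the $\alpha^{2}\mathsf{\Sigma}_{A}^{2}\bm{\mathsf{C}}$ factor in that display belonging properly to the boost factor $\mathcal{E}(\alpha,\mathsf{\Sigma}_{A},Z)$ rather than to $\mathcal{Y}(Z)$ itself.
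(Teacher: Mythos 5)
Your proposal is correct and follows essentially the same route as the paper: reduce the inner integral to $\tfrac{\sqrt{\pi}}{2}\zeta\,\mathrm{Erf}(Z_{1}/\zeta)$ by the substitution $u=(Z_{1}-Z_{2})/\zeta$, then integrate the error function using its standard antiderivative, and your resulting bracket $\tfrac{\zeta}{2}\bigl[\sqrt{\pi}Z\,\mathrm{Erf}(Z/\zeta)+\zeta e^{-Z^{2}/\zeta^{2}}-\zeta\bigr]$ agrees with the paper's equation for $\mathcal{Y}(Z)$. You are also right that the final display of the lemma as printed conflates $\mathcal{Y}(Z)$ with the boost factor $\exp\bigl(\tfrac12\alpha^{2}\mathsf{\Sigma}_{A}^{2}\bm{\mathsf{C}}\,\mathcal{Y}(Z)\bigr)$; that is a notational slip in the paper, not a defect in your argument.
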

\begin{proof}
The inner integral is an incomplete Gaussian integral which gives the standard error function Erf so that
\begin{align}
\mathcal{W}(z_{1})=\int_{0}^{z_{1}}\exp(-|z_{1}-z_{2}|^{2}\xi^{-2})dz_{2}=\frac{1}{2}\sqrt{\pi}Erf\left(\frac{z_{1}}{\xi}\right)\xi
\end{align}
The error function has the properties $Erf(0)=0$ and $Erf(\infty)=1$. Then
\begin{align}
{\mathcal{Y}}(z)=\int_{0}^{z}{\mathcal{W}}(z_{1})dz_{1}=\frac{1}{2}\sqrt{\pi}\int_{0}^{z}Erf\left(\frac{z_{1}}{\xi}\right)\xi dz_{1}
\end{align}
The full integral is then
\begin{align}
\mathcal{Y}(z)=\int_{0}^{z}\mathcal{W}(z_{1})dz_{1}= \frac{1}{2}\xi\left[\exp(-z^{2}/\xi^{2})\left(\sqrt{\pi}z Erf\left(\frac{z}{\xi}\right)\exp\left(\frac{z^{2}}{\xi^{2}}\right)+\xi\right)-\xi\right]
\end{align}
and so
\begin{align}
&\exp\left(\frac{1}{2}\alpha^{2}\mathsf{A}^{2}\bm{\mathsf{C}}{\mathcal{Y}}(z)\right)
=\exp\left(\frac{1}{2}\alpha^{2}\mathsf{A}^{2}\bm{\mathsf{C}}\int_{0}^{z}\bigg|\int_{0}^{z_{1}}\exp(-|z_{1}-z_{2}|^{2}\xi^{-2})dz_{2}\bigg|dz_{1}\right)\nonumber\\&
=\exp\left(\frac{1}{4}\alpha^{2}\mathsf{A}^{2}\bm{\mathsf{C}}\xi\left[\exp(-z^{2}/\xi^{2})\left(\sqrt{\pi}z Erf\left(\frac{z}{\xi}\right)\exp\left(\frac{z^{2}}{\xi^{2}}\right)+\xi\right)-\xi\right]\right)
\end{align}
\end{proof}
Note that $\mathlarger{\mathcal{Y}}(z)$ is dimensionless as required. The following theorem has now been established and proved.
\begin{thm}
Let $\mathbb{I\!D}\subset\mathbb{R}^{3}$ be a slab geometry along of depth $\mathrm{L}$ with boundaries $z=0$ and $z=\mathrm{L}$ and $\mathrm{L}\gg 1$. A monochromatic laser with a flat profile and incident intensity $\psi_{o}$ is incident upon the slab along the z-axis or unit vector $\widehat{\mathbf{e}}_{3}$ at $z=0$. The slab contains purely absorbing matter with an absorption coefficient of $\mathsf{A}$ with respect to the wavelength of the laser light. If $\mathsf{A}$ is constant and homogenous then the beam decays as Beer's law
$ \psi(z,\widehat{\mathbf{e}}_{3})=\psi_{o}\exp(-\mathsf{A} z)$. If the absorption coefficient is randomly fluctuating in space as
\begin{align}
\mathbf{A}(z)=\mathsf{A}+\alpha \mathsf{A}{\mathbf{G}}(z)
\end{align}
where $\alpha>0$ is a small parameter determining the magnitude of the fluctuations, and the Gaussian random function or noise has the binary correlation
\begin{align}
\mathbb{E}\big\lbrace {\mathbf{G}}(z_{1}){\otimes}{\mathbf{G}}(z_{2})\big\rbrace=
\phi(z_{1},z_{2};\xi)=\bm{\mathsf{C}}\exp(-|z_{1}-z_{2}|^{2}\xi^{-2})
\end{align}
with $\mathbb{E}\lbrace\mathbf{G}(z)\rbrace=0$ and correlation length $\xi$. The beam propagation and attenuation within the stochastic medium is then described by the stochastic differential equation
\begin{align}
d\widehat{\bm{\psi}(z,\widehat{\mathbf{e}}_{3})}=-\mathsf{A}\widehat{\bm{\psi}(z,\mathbf{e}_{3})}dz-\alpha\mathsf{A}\widehat{\bm{\psi}(z,\mathbf{e}_{3})}
{\mathbf{G}}(z)dz
\end{align}
The stochastic average of the solution of the SDE is then a modified Beer's law of the form
\begin{align}
&\mathbb{I}(z,\widehat{\mathbf{e}}_{3})=\mathbb{E}\big\lbrace\widehat{\bm{\psi}(z,\widehat{\mathbf{e}}_{3})}\big\rbrace
=\psi_{o}\exp(-\mathsf{A}z)
\exp\left(\frac{1}{2}\alpha^{2}\mathsf{A}^{2}\bm{\mathsf{C}}\int_{0}^{z}\bigg|\int_{0}^{z_{1}}\mathlarger{\phi}(z_{1},z_{2};\xi))dz_{2}\bigg|dz_{1}\right)\nonumber\\&
=\psi_{o}\exp(-\mathsf{A}z)
\exp\left(\frac{1}{2}\alpha^{2}\mathsf{A}^{2}\bm{\mathsf{C}}\int_{0}^{z}\bigg|\int_{0}^{z_{1}}\exp(-|z_{1}-z_{2}|^{2}\xi^{-2})dz_{2}\bigg|dz_{1}\right)\nonumber\\&
=\psi_{o}\exp(-\mathsf{A}z)\exp\left(\frac{1}{4}\alpha^{2}\mathsf{A}^{2}\bm{\mathsf{C}}\xi\left[\exp(-z^{2}/\xi^{2})\left(\sqrt{\pi}z Erf\left(\frac{z}{\xi}\right)\exp\left(\frac{z^{2}}{\xi^{2}}\right)+\xi\right)-\xi\right]\right)
\end{align}
\end{thm}
\subsection{Plots of $\mathbb{I}(z,\widehat{\mathbf{e}_{3}})$}
If we set $\mathsf{A}=1cm^{-1}$ and $\xi=1cm$ and also $\bm{\mathsf{C}}=1$ then
\begin{align}
&\mathbb{I}(z,\widehat{\mathbf{E}}_{3})=\mathbb{E}\big\lbrace\widehat{\bm{\psi}(z,\widehat{\mathbf{e}}_{3})}\big\rbrace\nonumber\\&
=\psi_{o}\exp(-z)\exp\bigg(\frac{1}{4}\alpha^{2}\bigg[\exp(-z^{2})\bigg(\sqrt{\pi}z Erf\bigg({z}\bigg)
\exp\bigg(z^{2}\bigg)+1\bigg)-1\bigg]\bigg)
\end{align}
This function can then be plotted for various values of $\alpha$ and compared with the standard Beer's Law which is $\psi(z)=\psi_{o}\exp(-z)$.  Also, one can set $\psi_{o}=1Wcm^{-2}$. These plots are for illustration to demonstrate the quantitative behavior of the expression. The results are shown in Figure 3. The effect of averaging over the fluctuations is that the beam is slightly less attenuated since the mean free path of absorption is now slightly increased. The magnitude of this effect depends on the value of $\alpha$ which determines the magnitude of the random fluctuation in the absorption coefficient. In reality, $\alpha$ will be very small so the deviation from Beer's Law will also be small.
\begin{figure}[htb]
\begin{center}
\includegraphics[height=6.0in,width=6.0in]{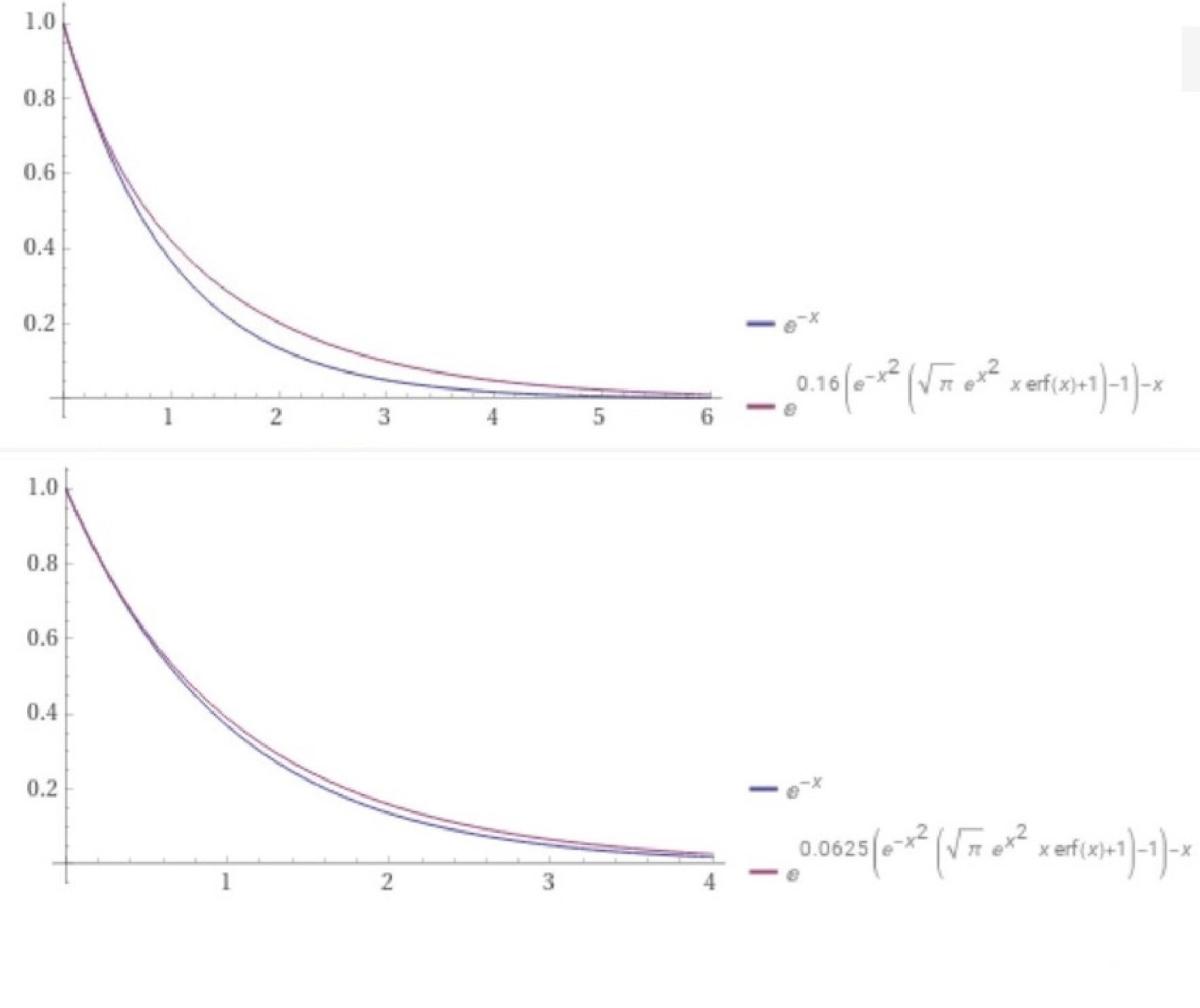}
\caption{Beam intensity penetration into the slab along the z-axis. The blue curve is the standard Beer's Law $\psi(z)$ and the purple curve is the averaged Beer's law solution $\mathbb{I}(z,\mathbf{e}_{3})=\mathbb{E}\lbrace\widehat{\psi(z,\widehat{\mathbf{e}_{3}})}\rbrace $ for a random absorption coefficient. The values of $\alpha$ are $0.8$ for the top figure and $0.5$ for the bottom figure. The blue and purple curves converge together as $\alpha\rightarrow 0$.}
\end{center}
\end{figure}
\subsection{Beer-Lambert Law}
The result automatically extends to the Beer-Lambert Law. If the scattering coefficient is non-zero then an initial beam of intensity $\psi_{o}$ will decay as
\begin{align}
\psi(z,\widehat{\mathbf{e}}_{3})=\psi_{o}\exp(-(\mathsf{A}+\mathsf{S})z)=\psi_{o}\exp(-\mathsf{T}z)
\end{align}
where $\mathsf{T}=\mathsf{A}+\mathsf{S}$ is the total attenuation coefficient. If the coefficients are noisy or
Gaussian random functions then
\begin{align}
&\mathbf{A}(z)=\mathsf{A}+\alpha\mathsf{A}\mathbf{G}(z)\\&
\mathbf{S}(z)=\mathsf{S}+\alpha\mathsf{S}\mathbf{G}(z)
\end{align}
The stochastic average of the solution of the SDE is then a modified Beer-Lambert law of the form
\begin{align}
&\mathbb{I}(z,\widehat{\mathbf{e}}_{3})=\mathbb{E}\big\lbrace\widehat{\bm{\psi}(z,\widehat{\mathbf{e}}_{3})}\big\rbrace
=\psi_{o}\exp(-\mathsf{T}z)
\exp\left(\frac{1}{2}\alpha^{2}\mathsf{T}^{2}\bm{\mathsf{C}}\int_{0}^{z}\bigg|\int_{0}^{z_{1}}\mathlarger{\phi}(z_{1},z_{2};\xi))dz_{2}\bigg|dz_{1}\right)\nonumber\\&
=\psi_{o}\exp(-\mathsf{T}z)
\exp\left(\frac{1}{2}\alpha^{2}\mathsf{T}^{2}\bm{\mathsf{C}}\int_{0}^{z}\bigg|\int_{0}^{z_{1}}\exp(-|z_{1}-z_{2}|^{2}\xi^{-2})dz_{2}\bigg|dz_{1}\right)\nonumber\\&
=\psi_{o}\exp(-\mathsf{T}z)\exp\left(\frac{1}{4}\alpha^{2}\mathsf{T}^{2}\bm{\mathsf{C}}\xi\left[\exp(-z^{2}/\xi^{2})\left(\sqrt{\pi}z Erf\left(\frac{z}{\xi}\right)\exp\left(\frac{z^{2}}{\xi^{2}}\right)+\xi\right)-\xi\right]\right)
\end{align}
However, the light flux scattered from the beam will then be described by the RTE or its diffusion approximation if $\mathsf{S}\gg\mathsf{A}$. The random scatter coefficient will then affect the evolution and diffusion of the scattered light flux.(This may be considered in a followup article.)
\appendix
\renewcommand{\theequation}{\Alph{section}.\arabic{equation}}
\section{\textbf{Gaussian random functions}}
\begin{defn}(\textbf{Formal definition of Gaussian random fields})\newline
GRFS are formally summarized as follows. More details can be found in \textbf{[30-37]}.
Let $(\bm{\Omega},\mathcal{F},{\mathbb{P}})$ be a probability space. Within the probability triplet, $(\bm{\Omega},\mathcal{F})$ is a \textbf{measurable space}, where $\mathcal{F}$ is the $\sigma$-algebra (or Borel field) that should be interpreted as being comprised of all reasonable subsets of the state space $\bm{\Omega}$. Then:
\begin{enumerate}[(a)]
 \item ${{\mathbb{P}}}$ is a function such that ${{\mathbb{P}}}:\mathcal{F}\rightarrow [0,1]$, so that for all $\mathcal{B}\in\mathcal{F}$, there is an associated probability ${{\mathbb{P}}}(\mathcal{B})$. The measure is a probability measure when ${\mathbb{P}}(\bm{\Omega})=1$.
\item Let $\mathbf{x}_{i}\subset{{\mathbb{I\!D }}}\subset\bm{\mathbb{I\!R}}^{n}$ be Euclidean coordinates and let
$(\bm{\Omega},{\mathcal{F}},{{\mathbb{P}}})$ be a probability space. Let ${\mathbf{G}}(\mathbf{x};\omega)$ be a random scalar function that depends on the coordinates $\mathbf{x}\subset{\mathbb{I\!D}}\subset{\mathbb{R}}^{n}$ and also $\omega\in\bm{\Omega}$.
     \item Given any pair $(\mathbf{x},\omega)$ there $\bm{\exists}$ map $\mathcal{M}:{\mathbb{I\!R}}^{n}\times\bm{\Omega}\rightarrow{\mathbb{R}}$ such that
\begin{align}
\mathcal{M}:(\omega,\mathbf{x})\longrightarrow{\mathbf{G}}(\mathbf{x};\omega)
\end{align}
so that ${{\mathbf{G}}(\mathbf{x};\omega)}$ is a \textbf{random function or field} on $\mathbb{I\!D}\subset\mathbb{R}^{n}$ with respect to the probability space $(\bm{\Omega},\mathcal{F},{\mathlarger{\mathbb{P}}})$.
\item A random field is then essentially a family of random variables $\lbrace{\mathbf{G}}(\mathbf{x};\omega)\rbrace$ defined with respect to the space $(\bm{\Omega},\mathcal{F},{\mathbb{P}})$ and ${\mathbb{R}}^{n}$.
\item The fields can also include a time variable $t\in{\mathbb{ R}}^{+}$ so that given any triplet
$(\mathbf{x},t,\omega)$ there is a mapping $\mathfrak{M}:{\mathbb{R}}\times{\mathbb{R}}^{n}\times\bm{\Omega}\rightarrow {\mathbb{R}}$ such that $\mathfrak{M}:(\mathbf{x},t,\omega)\hookrightarrow \mathbf{G}(\mathbf{x},t;\omega)$ is a \textbf{spatio-temporal random field}.Normally, the field will be expressed in the form ${\mathbf{G}}(\mathbf{x},t)$ or ${\mathbf{G}}(\mathbf{x})$ with $\omega$ dropped.
\item The random field $\mathbf{G}(\mathbf{x})$ will have the following bounds and continuity properties:
\begin{align}
&{{\mathbb{P}}}\bigg[\sup_{\mathbf{x}\in\mathbb{I\!D}}|{\mathbf{G}}(\mathbf{x})|~~<~~\infty\bigg]~=+1\\&
{{\mathbb{P}}}\bigg[\lim_{\mathbf{x}\rightarrow\mathbf{y}}\big|{\mathbf{G}}(\mathbf{x})-{\mathbf{G}}(\mathbf{y})\big|=0,
~\forall~(\mathbf{x},\mathbf{y})\in\mathbb{I\!D}\bigg]=1
\end{align}
\end{enumerate}
\end{defn}
\begin{lem}
The random field is at the least, mean-square differentiable in that
\begin{align}
\nabla_{j}{\mathbf{G}}(\mathbf{x})=\frac{\partial}{\partial x_{j}}{\mathbf{G}}(\mathbf{x})= \lim_{\mathbf{h}\rightarrow 0} \frac{{\mathbf{G}}(\mathbf{x}+|\mathbf{h}|{\widehat{\mathbf{e}}}_{j})-{\mathbf{G}}(\mathbf{x})}{|\mathbf{h}|}
\end{align}
where $\mathlarger{\widehat{\mathbf{E}}}_{j}$ is a unit vector in the $j^{th}$ direction. For a Gaussian field, sufficient conditions for differentiability can be given in terms
of the covariance or correlation function, which must be regulated at $\mathbf{x}=\mathbf{y}$ The derivatives of the field $\nabla_{i}{\mathbf{G}}(\mathbf{x}),\nabla_{i}\nabla_{j}{\mathbf{G}}(\mathbf{x})$ exist at least up to 2nd order and do line, surface and volume integrals
$\mathlarger{\int}_{\bm{\Omega}}{\mathbf{G}}(\mathbf{x},t)d\mu(\mathbf{x})$ with respect to domain $\mathbb{I\!D}$.(See Appendix B.)
The derivatives or integrals of a random field are also a random field.
\end{lem}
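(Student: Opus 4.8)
The plan is to work entirely inside the Hilbert space $L^{2}(\bm{\Omega},\mathcal{F},\mathbb{P})$ of second-order random variables and to reduce every assertion---mean-square differentiability, existence of higher derivatives, and integrability---to a single analytic criterion on the covariance $\phi$. The engine is the Lo\`eve (mean-square Cauchy) criterion: since $L^{2}(\bm{\Omega})$ is complete, a family $\{Y_{h}\}$ converges in mean square as $h\to 0$ if and only if the scalar limit $\lim_{h,h'\to 0}\mathbb{E}\big\langle Y_{h}\otimes Y_{h'}\big\rangle$ exists and is finite, with $h,h'$ tending to zero independently. Thus no probabilistic limit need be exhibited by hand; it suffices to test convergence of the deterministic inner products, each of which is computable from $\phi$ alone because $\mathscr{G}$ is centred Gaussian with the two-point function (2.2).

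First I would treat mean-square differentiability. Form the difference quotient $D^{(j)}_{h}=|h|^{-1}\big(\mathscr{G}(\mathbf{x}+|h|\widehat{\mathbf{E}}_{j})-\mathscr{G}(\mathbf{x})\big)$ and compute, using bilinearity of the covariance, $\mathbb{E}\big\langle D^{(j)}_{h}\otimes D^{(j)}_{h'}\big\rangle$ as a symmetric second-order difference quotient of $\phi(\mathbf{x},\mathbf{y})$ in the $j$th coordinate of both arguments. By Lo\`eve, $D^{(j)}_{h}$ converges in mean square precisely when this difference quotient has a finite limit, that is, when the mixed partial $\partial^{2}\phi/\partial x_{j}\,\partial y_{j}$ exists on the diagonal $\mathbf{x}=\mathbf{y}$; this is the content of the phrase ``regulated at $\mathbf{x}=\mathbf{y}$''. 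For the Gaussian ansatz $\phi=\bm{\mathsf{C}}\exp(-|\mathbf{x}-\mathbf{y}|^{2}\zeta^{-2})$ (the case $\kappa=2$) the kernel is $C^{\infty}$, the first diagonal derivative vanishes exactly as computed in the Proposition on the stochastic coefficient, and the second derivative equals a finite constant of order $\bm{\mathsf{C}}\zeta^{-2}$, so the limit exists and $\nabla_{j}\mathscr{G}$ is well defined in mean square. Iterating the identical argument on the (already Gaussian) first derivative produces $\nabla_{i}\nabla_{j}\mathscr{G}$, whose existence reduces to finiteness of the fourth mixed diagonal derivative of $\phi$, again supplied by smoothness of the Gaussian kernel.

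Next I would establish mean-square integrability over $\mathbb{I\!D}$. Given a tagged partition $P$ with tags $\xi_{i}$ and measures $\mu(\Delta_{i})$, set $S_{P}=\sum_{i}\mathscr{G}(\xi_{i})\,\mu(\Delta_{i})$. Then $\mathbb{E}\big\langle S_{P}\otimes S_{P'}\big\rangle=\sum_{i,k}\phi(\xi_{i},\xi_{k})\,\mu(\Delta_{i})\,\mu(\Delta'_{k})$ is exactly a double Riemann sum for $\int_{\mathbb{I\!D}}\int_{\mathbb{I\!D}}\phi(\mathbf{x},\mathbf{y})\,d\mu(\mathbf{x})\,d\mu(\mathbf{y})$. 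Because $\phi$ is continuous and bounded by its regulated diagonal value $\phi(0,0';\zeta)=\bm{\mathsf{C}}<\infty$ from (2.3) on the compact slab, this double integral converges, so by Lo\`eve the net $\{S_{P}\}$ has a mean-square limit, which defines $\int_{\mathbb{I\!D}}\mathscr{G}(\mathbf{x})\,d\mu(\mathbf{x})$; the same estimate covers the line, surface, and volume cases verbatim. Finally, each $D^{(j)}_{h}$ and each $S_{P}$ is a finite linear combination of jointly Gaussian values of $\mathscr{G}$, hence Gaussian, and the Gaussian class is closed under $L^{2}$ limits---convergence of mean and variance forces a Gaussian limit with those parameters---so the derivatives and integrals are themselves Gaussian random variables, and letting the base point vary they constitute Gaussian random fields, as asserted.

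The main obstacle is the differentiability step, not the integrability one. Integrability needs only continuity and boundedness of $\phi$, which the regulation condition (2.3) delivers immediately, whereas differentiability genuinely requires the second mixed difference quotient of $\phi$ to converge on the diagonal---a real smoothness demand that fails for $\kappa=1$, where the exponential (colored-noise) kernel has a corner at $\mathbf{x}=\mathbf{y}$ and yields a field that is mean-square continuous but nowhere mean-square differentiable. It is precisely here that the hypothesis $\kappa\ge 2$, equivalently the vanishing of the first diagonal derivative recorded in the Proposition, does the essential work.
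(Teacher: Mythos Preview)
Your argument is correct and is in fact the standard route: the Lo\`eve mean-square Cauchy criterion reduces both differentiability and integrability to analytic statements about $\phi$, and you have identified the right conditions (existence of the mixed second partial of $\phi$ on the diagonal for $\nabla_{j}\mathscr{G}$, and Riemann integrability of $\phi$ over $\mathbb{I\!D}\times\mathbb{I\!D}$ for the stochastic integral). Your observation that the $\kappa=1$ kernel fails the differentiability test while $\kappa\ge 2$ succeeds is also correct and sharpens the hypotheses.

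As for comparison with the paper: there is nothing to compare. The paper states this lemma without proof, merely pointing to an Appendix~C that does not appear in the manuscript; the only supporting computation is the one-line diagonal-derivative check in Proposition~2.2. Your proposal therefore supplies considerably more than the paper does. If anything, you could tighten the presentation by noting that the iteration to second derivatives requires the fourth mixed partial of $\phi$ on the diagonal, which for the Bargmann--Fock kernel $\kappa=2$ is immediate from real-analyticity, and that closure of the Gaussian class under $L^{2}$ limits follows from convergence of characteristic functions rather than just moments---but these are refinements, not gaps.
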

\begin{defn}
The stochastic expectation $\mathbb{E}\lbrace\bullet\rbrace $ and binary correlation with respect to the space $(\Omega,{\mathcal{F}},{\mathlarger{\mathbb{P}}})$ is defined as follows, with $(\omega,\zeta)\in\mathbf{\Omega}$
\begin{align}
&\mathbb{E}\bigg\lbrace\bullet\bigg\rbrace=\mathlarger{\int}_{\omega}\bullet~d{{\mathbb{P}}}[\omega]\\&
\mathbb{E}\bigg\lbrace\bullet{\mathlarger{\otimes}}\bullet\bigg\rbrace=\mathlarger{\int}\!\!\!\!\mathlarger{\int}_{\Omega}
\bullet\mathlarger{\otimes}\bullet~d{{\mathbb{P}}}[\omega]
d{{\mathbb{P}}}[\zeta]
\end{align}
For Gaussian random fields $\bullet=\mathlarger{\mathscr{G}}(\mathbf{x})$ only the binary correlation is required so that
\begin{align}
&\mathbb{E}\bigg\lbrace{\mathbf{G}}(\mathbf{x})\bigg\rbrace=\mathlarger{\int}_{\omega}{\mathbf{G}}(\mathbf{x};\omega)~d\mathbb{P}[\omega]=0\\&
\mathbb{E}\bigg\lbrace{\mathbf{G}}(\mathbf{x})\mathlarger{\otimes}{\mathbf{G}}(\mathbf{y})\bigg\rbrace=
\mathlarger{\int}\!\!\!\!\mathlarger{\int}_{\Omega}{\mathbf{G}}(\mathbf{x};\omega)\mathlarger{\otimes}
{\mathbf{G}}(\mathbf{y};\xi)~d\mathbb{P}[\omega]d{{\mathbb{P}}}[\zeta]\nonumber\\&
=\mathlarger{\phi}(\mathbf{x},\mathbf{y};\xi)
\end{align}
and regulated at $\mathbf{x}=\mathbf{y}$ for all $(\mathbf{x},\mathbf{y})\in\mathbb{I\!D}$ if
\begin{align}
\mathbb{E}\big\lbrace{\mathbf{G}}(\mathbf{x})\mathlarger{\otimes}
{\mathbf{G}}(\mathbf{x})\big\rbrace=\bm{\mathsf{C}}<\infty
\end{align}
\end{defn}
\begin{defn}
The correlation between two or more fields is denoted by the operation $\mathlarger{\otimes}$. We say that two random fields $(\mathbf{G}(\mathbf{x}),\mathbf{G}(\mathbf{y}))$ defined for any $(\mathbf{x},\mathbf{y})\in\mathbb{I\!D}$ are correlated or uncorrelated if
\begin{empheq}[right=\empheqrbrace]{align}
&\mathbb{E}\big\lbrace{\mathbf{G}}(\mathbf{x})\mathlarger{\otimes}{\mathbf{G}}(\mathbf{y})\big\rbrace\ne 0 \nonumber\\&
\mathbb{E}\big\lbrace{\mathbf{G}}(\mathbf{x})\mathlarger{\otimes}{\mathbf{G}}(\mathbf{y})\big\rbrace= 0 \nonumber
\end{empheq}
\end{defn}
The binary 2-point function nor covariance fully determines all its properties, but the key advantages of GRSF are that a GRSF is Gaussian distributed and can be classified purely by its first and second moments, and all high-order moments and cumulants can be ignored. GRFs also tend to be more tractable. The same definitions can be applied to spatio-temporal fields.
\begin{defn}
If $\mathbf{G}(\mathbf{x})$ is a GRSF existing for all $\mathbf{x}\in\bm{\mathbb{R}}^{n}$ or $\mathbf{x}\in\mathbb{I\!D}\subset{\mathbb{R}}^{3}$ then
\begin{align}
&\mathbb{E}\big\lbrace{\mathbf{G}}(\mathbf{x})\big\rbrace=0\\&
\mathbb{E}\big\lbrace{\mathbf{G}}(\mathbf{x})\mathlarger{\otimes}~\mathbf{G}(\mathbf{y})
\bigg\rbrace=\phi(\mathbf{x},\mathbf{y};\xi)
\end{align}
for any 2 points $(\mathbf{x},\mathbf{y})\in\mathbb{I\!D}$, and with a correlation length $\lambda$. It is regulated if
$\mathbb{E}\lbrace({\mathbf{G}}(\mathbf{x})\mathlarger{\otimes}~{\mathbf{G}}(\mathbf{x})\big\rbrace=\bm{\mathsf{C}}<\infty$
For a white-in-space Gaussian noise or random field ${\mathbb{E}}\lbrace{\mathbf{G}}(\mathbf{x})\mathlarger{\otimes}~{\mathbf{G}}(\mathbf{y})\rbrace=
\bm{\mathsf{C}}\delta^{n}(\mathbf{x}-\mathbf{y})$ and is unregulated. This paper utilises only regulated GRSFs. The binary covariance is then
\begin{align}
&{\mathbb{COV}}\bigg\lbrace{\mathbf{G}}(\mathbf{x})\mathlarger{\otimes}~{\mathbf{G}}(\mathbf{y})\bigg\rbrace\equiv
{\mathbb{E}}\bigg\lbrace{\mathbf{G}}(\mathbf{x})\mathlarger{\otimes}~{\mathbf{G}}(\mathbf{y})\bigg\rbrace
+{\mathbb{E}}\bigg\lbrace{\mathbf{G}}(\mathbf{x})\bigg\rbrace\mathbb{E}\bigg\lbrace{\mathbf{G}}(\mathbf{y},t)\bigg\rbrace\nonumber\\&
={\mathbb{E}}\bigg\lbrace{\mathbf{G}}(\mathbf{x})\mathlarger{\otimes}~{\mathbf{G}}(\mathbf{y})\bigg\rbrace
=\phi(\mathbf{x},\mathbf{y};\xi)
\end{align}
\end{defn}
\section{\textbf{Stochastic Integration of random fields}}
The integral of a GRSF is defined as the limit of a Riemann sum of the field over the partition of a domain \textbf{34-37}
\begin{prop}
Let ${\mathbb{I\!D}}\subset{\mathbb{R}}^{n}$ be a (closed) domain with boundary $\partial{\mathbb{I\!D}}$ and $x=(x_{1},...,x_{n})\subset{\mathbb{I\!D}}$. Let $\mathbb{I\!D}=\bigcup_{q=1}^{M}\mathbb{I\!D}_{1}$ be a partition of $\mathbb{I\!D}$ with $\mathbb{I\!D}_{p}\bigcap\mathbb{I\!D}_{q}=\varnothing$ if $p\ne q$. Let $\mathbf{x}^{(q)}\in\mathbb{I\!D}_{q}$ for all $q=1...M$. Note $\mathbf{x}^{(q)}\equiv (x_{1}^{(q)},...x_{n}^{(q)})$ in $\mathbb{R}^{n}$. Then $\mathbf{x}^{(1)}\in\mathbb{I\!D}_{1}, \mathbf{x}^{(2)}\in\mathbb{I\!D}_{2},
...,\mathbf{x}^{(M)}\in\mathbb{I\!D}$. Let $\mathrm{Vol}(\mathbb{I\!D}_{q})$ be the volume of the partition $\mathbb{I\!D}_{q}$ so that $\mathrm{vol}(\mathbb{I\!D})=\sum_{q}^{M}\mathrm{Vol}(\mathbb{I\!D})$. Then let
\begin{equation}
\mathbb{I\!D}=\bigcup_{\xi=1}^{M}\mathbb{I\!D}_{q}
\end{equation}
be a partition of $\partial\mathbb{I\!D}$ with $\partial\mathbb{I\!D}_{\xi}$ be a partition of the boundary or surface into N constituents. Let $\bm{x}^{(q)}\in\partial\mathbb{I\!D}_{q}$ for all $q=1...M$. Note $\mathbf{x}^{(q)}\equiv (\mathbf{x}_{1}^{(q)},...\mathbf{x}_{n}^{(q)})$. Then $\mathbf{x}^{(1)}\in\partial\mathbb{I\!D}_{1}, \mathbf{x}^{(2)}\in\partial\mathbb{I\!D}_{2},...,\mathbf{x}^{(H)} \in\partial\mathbb{I\!D}$. Let $\|{\mathbb{D}}_{q}\|\equiv\mu(\partial\mathbb{I\!D}_{q})$ be the surface area of the partition $\partial\mathbb{I\!D}_{q}$ so that
\begin{align}
\partial\mathbb{I\!D}=\bigcap_{q=1}^{M}\partial\mathbb{I\!D}_{q} 
\end{align}
The total volume and area of $\mathbb{I\!D}$ is
\begin{align}
\mathrm{Vol}(\mathbb{I\!D})=\sum_{q=1}^{M}Vol(\mathbb{I\!D}_{q})
=\sum_{q=1}^{M}Vol(\mathbb{I\!D}_{q})
\end{align}
\begin{align}
\mathrm{Area}(\partial\mathbb{I\!D})=\sum_{q=1}^{M}\mathrm{Area}(\partial\mathbb{I\!D}_{q})
\end{align}
Given the probability triplet $(\Omega,\mathcal{F},{\mathbb{P}})$ then a Gaussian random field on $\mathbb{I\!D}$ for all $x\in\mathbb{I\!D}$ is ${\mathbf{G}}:\omega\times\mathbb{I\!D}\rightarrow {\mathbb{R}}$ and ${\mathbf{G}(x^{q},\omega)}\in{\mathbb{I\!D}}_{q}$ exists for all $x^{(q)}\in \mathbb{I\!D}_{q}$ and $\omega\in\Omega$. The stochastic volume integral and the stochastic surface integral are
\begin{align}
&\int_{\mathbb{I\!D}}{\mathbf{G}}{(\mathbf{x};\omega)}d\mu_{n}(\mathbf{x})=\lim_{all~v({\mathbb{I\!D}}_{q})\uparrow 0}\sum_{q=1}^{M}{\mathbf{G}}(\mathbf{x}^{(q)};\omega)\mathrm{Vol}(\mathbb{I\!D}_{q})\\&
\int_{\partial\mathbb{I\!D}}{\mathbf{G}}(\mathbf{x};\omega)d\mu_{n-1}(\mathbf{x})
=\lim_{all~\mathrm{Area}(\partial\mathbb{I\!D}_{q})\uparrow 0}\sum_{q=1}^{M}{\mathbf{G}}(\mathbf{x}^{(q)};\omega)Area(\partial\mathbb{I\!D}_{q})
\end{align}
When a Gaussian random field is integrated, it is the limit of a linear combination of Gaussian random variables/fields so it is again Gaussian.
\end{prop}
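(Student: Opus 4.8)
The plan is to show that every approximating Riemann sum in the definition is a Gaussian random variable, and that this Gaussianity survives the mean-square limit that defines the integral. I would treat the volume integral in full and observe that the surface integral follows verbatim after replacing $\mathrm{Vol}(\mathbb{I\!D}_q)$ by $\mathrm{Area}(\partial\mathbb{I\!D}_q)$ and $\mu_n$ by $\mu_{n-1}$.

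First, fix a partition $\{\mathbb{I\!D}_q\}_{q=1}^{M}$ with sample points $\mathbf{x}^{(q)}\in\mathbb{I\!D}_q$ and form $S_M(\omega)=\sum_{q=1}^{M}\mathscr{G}(\mathbf{x}^{(q)};\omega)\,\mathrm{Vol}(\mathbb{I\!D}_q)$. By the multivariate Gaussian property of the field (the $N$-point joint distribution of Appendix A), the vector $(\mathscr{G}(\mathbf{x}^{(1)}),\dots,\mathscr{G}(\mathbf{x}^{(M)}))$ is jointly Gaussian; since $S_M$ is a deterministic linear combination of its entries with coefficients $\mathrm{Vol}(\mathbb{I\!D}_q)$, it is a univariate Gaussian random variable. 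Its mean is $\sum_q\mathrm{Vol}(\mathbb{I\!D}_q)\,\mathbb{E}\langle\mathscr{G}(\mathbf{x}^{(q)})\rangle=0$ and its variance is $\sigma_M^2=\sum_{p,q}\mathrm{Vol}(\mathbb{I\!D}_p)\,\mathrm{Vol}(\mathbb{I\!D}_q)\,\phi(\mathbf{x}^{(p)},\mathbf{x}^{(q)};\zeta)$, recognizable as a Riemann sum for the deterministic integral $\int_{\mathbb{I\!D}}\int_{\mathbb{I\!D}}\phi(\mathbf{x},\mathbf{y};\zeta)\,d\mu_n(\mathbf{x})\,d\mu_n(\mathbf{y})$.

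Next I would pass to the limit of vanishing mesh. The double integral of $\phi$ is finite because $\phi$ is continuous and bounded by its regulated coincidence value $\mathsf{C}<\infty$ while $\mathbb{I\!D}$ has finite measure, so $\sigma_M^2\to\sigma^2:=\int_{\mathbb{I\!D}}\int_{\mathbb{I\!D}}\phi$. The characteristic function of $S_M$ is $\mathbb{E}\langle e^{itS_M}\rangle=\exp(-\tfrac{1}{2}\sigma_M^2 t^2)$, which converges pointwise to $\exp(-\tfrac{1}{2}\sigma^2 t^2)$; by the L\'evy continuity theorem the $S_M$ converge in distribution to a centered Gaussian of variance $\sigma^2$, and this limit is by definition the stochastic integral $\int_{\mathbb{I\!D}}\mathscr{G}(\mathbf{x};\omega)\,d\mu_n(\mathbf{x})$.

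The main obstacle is to guarantee that the $S_M$ genuinely converge, rather than merely having convergent variances. I would establish a Cauchy estimate in $L^2(\Omega,\mathbb{P})$: for two refinements, $\mathbb{E}\langle|S_M-S_{M'}|^2\rangle$ is a difference of double Riemann sums of $\phi$ which tends to zero by uniform continuity of $\phi$ on the compact product $\mathbb{I\!D}\times\mathbb{I\!D}$. Completeness of $L^2(\Omega,\mathbb{P})$ then yields a mean-square limit $S$, and since mean-square convergence implies convergence in distribution the preceding characteristic-function computation identifies $S$ as the centered Gaussian of variance $\sigma^2$, its mean and variance following by continuity of the inner product. Finally, to obtain the stronger assertion that the integral is again a Gaussian \emph{field} when the domain (or, in the one-dimensional case, the upper limit $Z$) is varied, I would note that for any finite collection of such integrals the joint law is a limit of jointly Gaussian vectors and hence jointly Gaussian, so all finite-dimensional distributions are Gaussian; this is exactly what is needed to justify property (3) of Definition 2.1 and the use of $\int_0^Z\mathscr{G}(Z')\,dZ'$ in Theorem 3.4.
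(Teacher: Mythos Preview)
Your argument is correct and is precisely the rigorous elaboration of the one idea the paper itself offers. In the paper this proposition carries no separate proof environment; the entire justification is the final clause of the statement, ``When a Gaussian random field is integrated, it is the limit of a linear combination of Gaussian random variables/fields so it is again Gaussian.'' You have taken that sentence and supplied the missing analytic scaffolding: joint Gaussianity of the sample vector from the multivariate definition in Appendix~A, closure under deterministic linear combinations, convergence of the variance to the double integral of $\phi$, an $L^{2}(\Omega,\mathbb{P})$ Cauchy estimate via uniform continuity of $\phi$ on the compact product, and identification of the limit law via characteristic functions and L\'evy continuity. The additional remark that finite collections of such integrals are jointly Gaussian is a useful strengthening that the paper tacitly relies on (e.g.\ in treating $\int_{0}^{Z}\mathscr{G}$ as a Gaussian field in $Z$) but never states. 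So your route is not different from the paper's---it is the paper's heuristic made into an actual proof.
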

Next, the stochastic expectations or averages are defined.
\begin{prop}
Since
\begin{equation}
\mathbb{E}\bigg\lbrace\bullet\bigg\rbrace
=\int_{\mathbb{I\!D}}(\bullet)d{\mathbb{P}}(\omega)
\end{equation}
the expectation of the volume integral is as follows.
\begin{align}
&\mathbb{E}\bigg\lbrace\int_{\mathbf{Q}}{\mathbf{G}}(\mathbf{x};\omega)d\mu_{n}(\mathbf{x})\bigg\rbrace
\equiv\int\!\!\int_{\mathbb{I\!D}}{\mathbf{G}}(\mathbf{x};\omega)d\mu_{n}(x) d\mathbb{P}(\omega)\\&
=\lim_{M\uparrow\infty}\lim_{all~\mathrm{Vol}(\mathbb{I\!D}_{q})\uparrow 0}\int_{\Omega}\sum_{q=1}^{M}
{\mathbf{G}}(x^{(q)};\omega)\mathrm{Vol}(\mathbb{I\!D}_{q})d\bm{\mathbb{P}}(\omega)=0
\end{align}
which vanishes for GRSFs since $\mathbb{E}\big\lbrace{\mathbf{G}}(x^{(q)})\big\rbrace=0$. Similarly, for the stochastic surface integrals
\begin{align}
&\mathbb{E}\left\lbrace\int_{\partial{\mathbb{I\!D}}}{\mathbf{G}}(x;\omega)d\mu_{n-1}(x)
\right\rbrace\equiv\int_{\Omega}\int_{\partial{\mathbb{I\!D}}}{\mathbf{G}}(x;\omega)
d^{n-1}x d\bm{\mathbb{P}}(\omega)\\&=\lim_{all~\mu(\partial{\mathbb{I\!D}}_{q})\uparrow 0}\int_{\Omega}\sum_{q=1}^{M}{\mathbf{G}}(x^{(q)};\omega)
Area(\partial{\mathbb{I\!D}}_{q})d{\mathbb{P}}(\omega)=0
\end{align}
\end{prop}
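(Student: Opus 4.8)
The plan is to prove that expectation commutes with the stochastic (Riemann) integral and then apply the mean-zero property $\mathbb{E}\langle\mathscr{G}(\mathbf{x})\rangle=0$ termwise. I would begin from the Riemann-sum definition of the stochastic volume integral established in the preceding proposition, writing the partial sum $S_{M}(\omega)=\sum_{q=1}^{M}\mathscr{G}(\mathbf{x}^{(q)};\omega)\,\mathrm{Vol}(\mathbb{I\!D}_{q})$, so that $\int_{\mathbb{I\!D}}\mathscr{G}(\mathbf{x};\omega)\,d\mu_{n}(\mathbf{x})$ is the limit of $S_{M}(\omega)$ as the mesh of the partition shrinks to zero. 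The entire content of the statement then reduces to the interchange $\mathbb{E}\langle\lim_{M}S_{M}\rangle=\lim_{M}\mathbb{E}\langle S_{M}\rangle$, after which linearity of expectation closes the argument at once.

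For the interchange I would work in $L^{2}(\bm{\Omega},\mathcal{F},\mathbb{P})$, where the stochastic integral of the previous proposition is most naturally realized as a mean-square limit of the Riemann sums. The regulation hypothesis $\mathbb{E}\langle\mathscr{G}(\mathbf{x})\otimes\mathscr{G}(\mathbf{x})\rangle=\mathsf{C}<\infty$ together with the mean-square continuity and differentiability of the field (Appendix A) guarantees that $S_{M}$ is Cauchy in $L^{2}(\mathbb{P})$ and converges to $\int_{\mathbb{I\!D}}\mathscr{G}\,d\mu_{n}$ there. Because $\mathbb{P}$ is a probability (hence finite) measure, $L^{2}$-convergence implies $L^{1}$-convergence, and the expectation functional $\mathbb{E}\langle\cdot\rangle$ is continuous on $L^{1}(\mathbb{P})$, so $\mathbb{E}\langle\lim_{M}S_{M}\rangle=\lim_{M}\mathbb{E}\langle S_{M}\rangle$. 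Alternatively, the almost-sure bound $\mathbb{P}[\sup_{\mathbf{x}\in\mathbb{I\!D}}|\mathscr{G}(\mathbf{x})|<\infty]=1$ recorded in the formal GRF definition furnishes the envelope $|S_{M}(\omega)|\le\big(\sup_{\mathbf{x}\in\mathbb{I\!D}}|\mathscr{G}(\mathbf{x};\omega)|\big)\,\mathrm{Vol}(\mathbb{I\!D})$, which (being integrable for a regulated centered Gaussian field) licenses a dominated-convergence argument instead.

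With the interchange secured, each finite partial sum is handled by linearity: $\mathbb{E}\langle S_{M}\rangle=\sum_{q=1}^{M}\mathbb{E}\langle\mathscr{G}(\mathbf{x}^{(q)})\rangle\,\mathrm{Vol}(\mathbb{I\!D}_{q})$, and since every coefficient $\mathbb{E}\langle\mathscr{G}(\mathbf{x}^{(q)})\rangle$ vanishes by the defining first-moment property of the GRSF, $\mathbb{E}\langle S_{M}\rangle=0$ for every $M$. Passing to the limit yields $\mathbb{E}\langle\int_{\mathbb{I\!D}}\mathscr{G}(\mathbf{x};\omega)\,d\mu_{n}(\mathbf{x})\rangle=0$. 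The surface-integral claim is then proved verbatim: replace the volume partition $\{\mathbb{I\!D}_{q}\}$ by the boundary partition $\{\partial\mathbb{I\!D}_{q}\}$ and the weights $\mathrm{Vol}(\mathbb{I\!D}_{q})$ by $\mathrm{Area}(\partial\mathbb{I\!D}_{q})$; the domination becomes $|S_{M}(\omega)|\le\big(\sup_{\mathbf{x}\in\partial\mathbb{I\!D}}|\mathscr{G}(\mathbf{x};\omega)|\big)\,\mathrm{Area}(\partial\mathbb{I\!D})$ and the identical mean-zero cancellation applies.

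The step I expect to be the main obstacle is the rigorous justification of the limit–expectation interchange, not the algebra, which is trivial once the interchange is granted. Strictly, one must verify that the Riemann sums converge (in $L^{1}(\mathbb{P})$ or almost surely) to the stochastic integral, and this rests on the mean-square continuity of $\mathscr{G}$ that the paper secures through the regulated, differentiable (non-white) correlation $\phi$ and the continuity bound $\mathbb{P}[\lim_{\mathbf{x}\to\mathbf{y}}|\mathscr{G}(\mathbf{x})-\mathscr{G}(\mathbf{y})|=0]=1$. Given these regularity hypotheses the domination is routine, but they are precisely the conditions that must be \emph{cited} rather than assumed silently; a white-in-space field, for which $\mathbb{E}\langle\mathscr{G}(\mathbf{x})\otimes\mathscr{G}(\mathbf{x})\rangle$ is infinite, would break the envelope and the mean-square limit, which is exactly why the proposition is restricted to regulated GRSFs.
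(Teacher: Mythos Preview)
Your approach is essentially the same as the paper's: the paper simply writes the expectation of the stochastic integral as the Riemann-sum limit, pulls the expectation inside the sum, and invokes $\mathbb{E}\langle\mathscr{G}(x^{(q)})\rangle=0$ termwise, with no separate proof environment and no justification of the limit--expectation interchange. Your proposal follows the same skeleton but is considerably more careful, supplying the $L^{2}\Rightarrow L^{1}$ continuity argument (or the dominated-convergence alternative) that the paper takes for granted; in that sense you have filled in the step the paper leaves implicit rather than taken a different route.
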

In one dimension, along the z-axis, the stochastic integral is
\begin{align}
\int_{\mathbb{I\!D}}\mathbf{G}(z)dz=\int_{0}^{z}\mathbf{G}(\bar{z})d\bar{z}
\end{align}
Given an integral (or summation) over a random field or stochastic process, the Fubini theorem states that the expectation of the integral or sum over a random field is equivalent to the integral or sum of the expectation of the field.
\begin{thm}
Let ${\mathbf{G}}(x)$ be a random field not necessarily Gaussian, existing for all $x\in\mathbb{I\!D}$ with expectation $\mathbb{E}
\lbrace\bm{\mathbf{G}}(x)\rbrace $, not necessarily zero. Then
\begin{equation}
\mathbb{E}\bigg\lbrace\int_{\mathbb{I\!D}}{\mathbf{G}}(x)d\mu(x)
\bigg\rbrace\equiv \int_{{\mathbb{I\!D}}}\mathbb{E}\bigg\lbrace
{\mathbf{G}}(\mathbf{x})\bigg\rbrace d\mu_{n}(\mathbf{x})
\end{equation}
For a set of M random fields ${\mathbf{G}}_{q}(x)$
\begin{equation}
\mathbb{E}\bigg\lbrace\sum_{q=1}^{M}\bm{\mathbf{G}}_{q}(x)\bigg\rbrace=
\sum_{q=1}^{M}\mathbb{E}\bigg\lbrace\bm{\mathbf{G}}_{q}(x)\bigg\rbrace
\end{equation}
\end{thm}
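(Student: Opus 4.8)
The plan is to treat the two assertions separately, because the finite-sum identity is elementary and in fact supplies the key lemma for the integral identity. For the second equation I would simply invoke the definition of the expectation as the Lebesgue integral $\mathbb{E}\langle\bullet\rangle=\int_{\Omega}\bullet\,d\mathbb{P}(\omega)$ over the probability space $(\bm{\Omega},\mathcal{F},\mathbb{P})$ together with the linearity of that integral: for any finite collection $\{\mathscr{G}_{q}\}_{q=1}^{N}$ of random fields lying in $L^{1}(\Omega,\mathbb{P})$ one has $\int_{\Omega}\sum_{q=1}^{N}\mathscr{G}_{q}(x;\omega)\,d\mathbb{P}(\omega)=\sum_{q=1}^{N}\int_{\Omega}\mathscr{G}_{q}(x;\omega)\,d\mathbb{P}(\omega)$, which is exactly $\mathbb{E}\langle\sum_{q}\mathscr{G}_{q}\rangle=\sum_{q}\mathbb{E}\langle\mathscr{G}_{q}\rangle$. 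No Gaussianity and no zero-mean assumption enter here, only integrability of each $\mathscr{G}_{q}$.

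For the integral identity I would exploit the definition of the stochastic integral already fixed in the preceding Proposition, namely as the partition limit of Riemann sums $\int_{\mathbb{I\!D}}\mathscr{G}(x;\omega)\,d\mu(x)=\lim_{M}\sum_{q=1}^{M}\mathscr{G}(x^{(q)};\omega)\,\mathrm{Vol}(\mathbb{I\!D}_{q})$. Applying the finite-sum identity just established to each such sum gives $\mathbb{E}\langle\sum_{q=1}^{M}\mathscr{G}(x^{(q)})\,\mathrm{Vol}(\mathbb{I\!D}_{q})\rangle=\sum_{q=1}^{M}\mathbb{E}\langle\mathscr{G}(x^{(q)})\rangle\,\mathrm{Vol}(\mathbb{I\!D}_{q})$, and the right-hand side is precisely a Riemann sum for the deterministic integral $\int_{\mathbb{I\!D}}\mathbb{E}\langle\mathscr{G}(x)\rangle\,d\mu(x)$. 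It then remains only to interchange the partition limit with the expectation on the left, i.e. to verify $\mathbb{E}\langle\lim_{M}S_{M}\rangle=\lim_{M}\mathbb{E}\langle S_{M}\rangle$, where $S_{M}$ denotes the Riemann sum.

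That interchange carries the analytic content, and I would justify it by dominated convergence, with the hypotheses of Appendix A supplying exactly the required control. The domain is the bounded slab $\mathbb{I\!D}=[0,\mathrm{L}]$, so $\mu(\mathbb{I\!D})=\mathrm{L}<\infty$, and the field obeys $\mathbb{P}[\sup_{x\in\mathbb{I\!D}}|\mathscr{G}(x)|<\infty]=1$ together with almost-sure sample-path continuity. Continuity forces $S_{M}(\omega)\to\int_{\mathbb{I\!D}}\mathscr{G}(x;\omega)\,d\mu(x)$ for almost every $\omega$, while the uniform estimate $|S_{M}(\omega)|\le\mu(\mathbb{I\!D})\,\sup_{x}|\mathscr{G}(x;\omega)|$ gives an $M$-independent dominating function; under the mild integrability $\mathbb{E}\langle\sup_{x}|\mathscr{G}(x)|\rangle<\infty$ (automatic in the Gaussian case of interest) this dominator lies in $L^{1}(\Omega)$, so the dominated convergence theorem licenses the swap. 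Equivalently the whole step may be phrased as a single application of the Fubini--Tonelli theorem on the product space $(\mathbb{I\!D}\times\Omega,\mathcal{B}(\mathbb{I\!D})\otimes\mathcal{F},\mu\otimes\mathbb{P})$.

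The principal obstacle is the measurability prerequisite shared by both routes: the map $\omega\mapsto\int_{\mathbb{I\!D}}\mathscr{G}(x;\omega)\,d\mu(x)$ must itself be $\mathcal{F}$-measurable, and the integrand $(x,\omega)\mapsto\mathscr{G}(x;\omega)$ must be jointly measurable on the product $\sigma$-algebra, without which the iterated integrals need not even be defined. I would dispatch this again by the almost-sure continuity recorded in Appendix A: a field that is $\mathcal{F}$-measurable in $\omega$ for each fixed $x$ and almost-surely continuous in $x$ on the separable domain $\mathbb{I\!D}$ is jointly measurable, and each $S_{M}$, being a finite linear combination of the random variables $\mathscr{G}(x^{(q)};\cdot)$, is manifestly measurable, so its almost-sure limit inherits measurability. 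With joint measurability and the $L^{1}$ domination in hand, both the dominated-convergence and the Fubini arguments go through and deliver the stated identity.
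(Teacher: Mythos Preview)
Your argument is correct and, in fact, substantially more careful than the paper's own treatment: the paper simply \emph{states} this result as ``the Fubini theorem'' without supplying any proof, whereas you provide a genuine justification via Riemann-sum approximation, linearity of expectation, and a dominated-convergence interchange, explicitly identifying the measurability and integrability hypotheses needed. The content you add---joint measurability from sample-path continuity, the $L^{1}$ dominator $\mu(\mathbb{I\!D})\sup_{x}|\mathscr{G}(x;\omega)|$, and the observation that the whole thing is a product-space Fubini--Tonelli application---is exactly what the paper tacitly assumes but never spells out, so your version is strictly more rigorous while remaining fully consistent with the paper's framework.
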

\clearpage
}
\end{document}